\documentclass[journal]{IEEEtran}
\usepackage{cite}
\usepackage{amsmath,amssymb,amsfonts}
\usepackage{algorithm}
\usepackage{algpseudocode}
\usepackage{graphicx}
\usepackage{xcolor}
\usepackage{svg}
\usepackage[utf8]{inputenc}
\usepackage{textcomp}
\usepackage{stfloats}
\usepackage{multirow}
\usepackage{tabularx}
\usepackage{subcaption}
\usepackage{diagbox}
\usepackage{mathtools}
\usepackage{booktabs}
\usepackage{makecell}
\usepackage{bbm}

\algrenewcommand\algorithmicdo{}
\algrenewtext{EndFor}{\textbf{End}}
\algrenewcommand\algorithmicfor{\textbf{For}}
\usepackage{etoolbox}

\newcolumntype{C}[1]{>{\centering\arraybackslash}p{#1}}

\newtheorem{proposition}{Proposition}

\def\BibTeX{{\rm B\kern-.05em{\sc i\kern-.025em b}\kern-.08em
    T\kern-.1667em\lower.7ex\hbox{E}\kern-.125emX}}

\begin{document}
\title{{Scalable Multimodal Beam Alignment in V2X: An Anti-Imbalance Graph Learning Approach}

\thanks{
Parts of this work have been presented at the 2025 IEEE/CIC International Conference on Communications in China (ICCC) \cite{LWWL-ICCC-25}, Shanghai, China.



}

}

\author{Jiahui Liang,
        Shuoyao Wang,~\IEEEmembership{Senior Member,~IEEE},
        and~Shijian Gao,~\IEEEmembership{Member,~IEEE}
        \vspace{-23pt}
        }

\markboth{Journal of \LaTeX\ Class Files,~2026}%
{Shell \MakeLowercase{\textit{et al.}}: Bare Demo of IEEEtran.cls for IEEE Journals}

\maketitle

\begin{abstract}
Efficient beam alignment is fundamental to high-throughput and reliable connectivity in Vehicle-to-Everything (V2X) systems. However, conventional beam management in dynamic vehicular topologies incurs prohibitive alignment overhead and struggles to maintain robust links under rapid mobility. To overcome these challenges, this paper proposes a distributed multimodal graph beam alignment (GBA) framework. The core innovation lies in leveraging onboard multimodal sensing data to predict implicit feedback while employing graph neural networks to coordinate multi-user alignment, thereby jointly enhancing scalability and drastically reducing overhead. The architecture adopts a dual-network design with GBA-RSU and GBA-Vehicle units, optimized through a hybrid strategy of centralized learning and federated learning (FL) to balance global performance with local privacy. Furthermore, a dedicated data augmentation (DA) scheme is introduced to address multimodal data imbalance issues in vehicular networks. Negative augmentation applies dominant modality dropout to bolster robustness, while positive augmentation generates underrepresented samples to mitigate label imbalance. Numerical results demonstrate that GBA maintains a competitive sum rate on par with high-resolution codebook-based feedback yet reduces beam alignment overhead by over $90\%$ and scales efficiently in mobile scenarios. Notably, integrating DA enables GBA to consistently outperform state-of-the-art FL-based alignment benchmarks, with particularly pronounced gains under severe label and modality imbalance, establishing a practical solution for V2X beam management.

\end{abstract}

\vspace{-3pt}
\begin{IEEEkeywords} Multi-modal sensing, beam alignment, federated learning, label imbalance, modality imbalance.\end{IEEEkeywords}

\vspace{-12pt}
\section{Introduction}

Millimeter-wave (mmWave) communication is vital to maintain high-throughput data transmission in Vehicle-to-Everything (V2X) systems \cite{CVGN-ComM-16}. To compensate for severe propagation loss, precise beamforming at the roadside unit (RSU) is essential \cite{RSPL-Com_M-14}. However, high mobility of vehicles shortens the channel coherence time, leaving a only limited time window for beam training and channel state information (CSI) acquisition \cite{GaCY-TWC-20}. As a result, conventional beam alignment methods relying on exhaustive beam initialization and downlink channel estimation (CE) incur prohibitive latency and signaling overhead. Consequently, aligning beam accurately with minimal overhead is critical for maintaining reliable and low-latency connectivity in V2X networks \cite{CDGY-IoTJ-22}.

Extensive research has sought to address this challenge. While precoding techniques like zero-forcing (ZF) \cite{JoUN-TSP-05} and weighted minimum mean square error (WMMSE) \cite{PeJa-TSP-22} perform well under perfect CSI, the expansion of antenna arrays leads to unsustainable computational complexity and pilot overhead. To mitigate these issues, recent studies have integrated deep learning (DL) into alignment optimization \cite{LCSZ-CommMag-19}. However, most DL-based methods rely on neural network (NN) pre-trained for static configurations. In dynamic V2X environments with varying user counts, such rigidity necessitates the training and storage of separate NNs for every possible configuration, resulting in excessive storage and computational demands. To enhance scalability, graph neural networks (GNNs) have been introduced due to their inherent adaptability to varying node counts \cite{SSZL-JSAC-21, LiYa-WCNC-24}. Nevertheless, these methods often still require extensive pilot sequences to match the performance of traditional optimization schemes.

Within the integrated sensing and communications (ISAC) framework, radio frequency (RF) sensing-aided beam alignment has emerged as a promising alternative to conventional methods \cite{WaBi-SPL-23, ShYH-TWC-25, XWA-TWC-25}. Building on this trend, the growing intelligence of modern vehicles has shifted attention toward multimodal sensing, which offers richer environmental context than RF-only solutions. Several studies have leveraged multimodal data to reduce overhead for beam selection \cite{ZGCY-TWC-24, ZWLH-TCOM-25} and beamforming \cite{NeCl-WCL-23, LiLY-TWC-25}. However, these approaches typically rely on RSU-mounted sensors, an infrastructure-centric design suffering from limited coverage \cite{ZGWC-TMC-25}. A natural extension is to exploit vehicle-mounted sensing to capture fine-grained environmental features directly at the user side. \cite{Sale-TVT-22} investigates vehicle-side sensing data collection for beam selection via centralized learning (CL), but it may raise privacy concerns given that multimodal data can encode richer semantic and identity-related cues \cite{CLZC-ACL-25}. Federated learning (FL) offers a privacy-preserving alternative \cite{SRGD-TMC-24, CSBC-ICC-25}. Yet existing FL-based methods assume balanced sensor configurations. It rarely holds in practical vehicular networks, where sensor setups are highly imbalanced. Training multimodal NN on such imbalanced data can lead to optimization divergence. Even when convergence occurs, unequal convergence rates across modality-specific branches may introduce bias, degrading both accuracy and robustness against missing modalities \cite{WHDW-TPAMI-25}. Compounding this issue, the label distribution across local datasets is often imbalanced, causing underrepresented labels to be diluted during global aggregation and further impairing prediction accuracy. In summary, designing a scalable and low-overhead beam alignment scheme that jointly addresses the dual challenge of modality and label imbalances in distributed multimodal sensing remains a critical and unresolved problem.

To address these research gaps, we propose a multimodal distributed graph beam alignment (GBA) scheme. Within this framework, each vehicle predicts low-overhead feedback using its multimodal sensing to minimize alignment overhead. The RSU then generates a scalable alignment strategy from the dynamic feedback graph to provide transmission service. To preserve privacy while capturing global network dynamics, the GBA framework is realized through two functional units: a GBA-Vehicle trained on private local datasets using FL, and a GBA-RSU trained on feedback graph via CL. These units are deployed on local vehicles and the infrastructure, respectively. To accommodate dynamic V2X networks, the GBA-RSU is built on a GNN architecture that trains vehicles as nodes and their interference relations as edges. This allows GBA-RSU to perform permutation-invariant processing and handle an arbitrary number of active links without requiring re-training for different network sizes. The GBA-Vehicle includes multiple dedicated branches to extract and fuse multimodal features for predicting local feedback with low overhead.

It is worth noting that the performance of this distributed framework is constrained by multimodal data imbalance. This imbalance manifests in two distinct forms: modality imbalance, which arises from heterogeneous sensor configurations and intermittent modality unavailability, and label imbalance, which stems from non-uniform vehicle trajectories within the RSU coverage area. To address these issues, we introduce a targeted augmentation (DA) strategy guided by three metrics: modality completeness, contribution ratio, and label overlap. The strategy comprises negative augmentation (DA-), which mitigates modality imbalance by dropping dominant modalities identified by the RSU to balance branch optimization, and positive data augmentation (DA+) , which alleviates label imbalance by generating synthetic samples for underrepresented labels using statistical information stored in the batch normalization (BN) layers of the globally aggregated model. This dual approach ensures robust and fair distributed training across diverse vehicles, enhancing the performance of GBA.

The key contributions of this work are as follows:

\begin{itemize}
	\item[ $\bullet$]
        We establish a distributed beam alignment framework that combines two complementary learning paradigms. On the RSU side, a GNN-based model is trained on feedback graphs to generate alignment strategies for varying vehicle counts and spatial distributions. In parallel, vehicle-side multimodal model is trained via FL to predict implicit feedback without exposing raw sensor data. Together, these components enable scalable and low-overhead beam management in dynamic V2X systems.
\end{itemize}

\begin{itemize}
	\item[ $\bullet$]
        We develop a targeted DA strategy that resolves the twin challenges of modality and label imbalances in multimodal FL. Guided by global statistics from the RSU, the strategy employs negative DA- to drop over-dominant modalities and prevent biased optimization, while DA+ synthesizes underrepresented samples to correct skewed label distributions. This dual approach ensures robust, fair distributed training and is, to our knowledge, the first augmentation scheme tailored to the imbalance patterns inherent in vehicular multimodal sensing.
\end{itemize}

\begin{itemize}
	\item[ $\bullet$]
        Extensive simulations on real-world driving datasets demonstrate that the proposed GBA reduces alignment overhead by $93.75\%$ compared to traditional optimization methods that rely on high-resolution codebook feedback, while maintaining competitive sum rates and high scalability. Moreover, this framework consistently outperforms state-of-the-art FL-based alignment benchmarks, with the performance margin widening substantially under severe imbalance, confirming the practical robustness.
\end{itemize}

\textit{Notations:} In this paper, we use lowercase letters $a$, bold lowercase letters $\mathbf{a}$, and bold uppercase letters $\mathbf{A}$ to denote scalars, vectors, and matrices respectively. The $i$-th element of a vector $\mathbf{a}$ is denoted by $[\mathbf{a}]_i$, while $[\mathbf{A}]_{i,j}$, $[\mathbf{A}]_{:, j}$, and $[\mathbf{A}]_{i, :}$ denote the $(i, j)$-th entry, $j$-th column, and $i$-th row of matrix $\mathbf{A}$. The $(\cdot)^{\top}$ represents the transpose operators. The operators $\Vert \cdot \Vert_{1}$ and $\Vert \cdot \Vert_{2}$ stand for the $\ell_1$-norm and $\ell_2$-norm. Furthermore, $\mathbb{E}[\cdot]$ and $\mathrm{Tr}(\cdot)$ represent the expectation and trace operations. The symbol $\oplus$, $\odot$, and $\otimes$ denote the concatenation, Hadamard product, and Exclusive OR (XOR) operations. Given vectors $\mathbf{a} \in \mathbb{C}^{M\times 1}$ and $\mathbf{b} \in \mathbb{C}^{N \times 1}$, $\mathbf{a} \oplus \mathbf{b} \in \mathbb{C}^{(M+N) \times 1}$. $\mathbf{1}_M$ and $\mathbf{0}_M$ denote the $M \times 1$ all-ones and all-zero vectors. Similarly, $\mathbf{I}_M$ represents the $M \times M$ identity matrix.

\vspace{-8pt}
\section{System Model and Problem Formulation}

As illustrated in Fig. \ref{fig1}, we consider a multi-user system where an RSU is equipped with an $N_t$-element uniform array to serve $K$ single-antenna users. Let $\mathbf{s} \in \mathbb{C}^{K \times 1}$ denote the vector of transmitted symbols, where $\mathbb{E}[\mathbf{s}\mathbf{s}^\text{H}] = \mathbf{I}_K$. The transmitted signal $\mathbf{x} \in \mathbb{C}^{N_t \times 1}$ is given by:
\begin{equation}
\mathbf{x} = \mathbf{W}\mathbf{P}^{1/2}\mathbf{s} = \sum_{k=1}^K \sqrt{p_k} \mathbf{w}_k s_k,
\label{eq1}
\end{equation}
where $\mathbf{w}_k \in \mathbb{C}^{N_t \times 1}$ is the $k$-th column of the precoding matrix $\mathbf{W}$, restricted to the unit sphere $\|\mathbf{w}_k\|_2 = 1$. The power allocation is governed by the diagonal matrix $\mathbf{P} = \text{diag}(p_1, \dots, p_K)$, subject to the total power constraint $\text{Tr}(\mathbf{P}) \leq P_{\text{max}}$. Since each precoding vector is selected from a predefined codebook $\mathbf{C} = [\mathbf{c}_1, ..., \mathbf{c}_W] \in \mathbb{C}^{N_t \times W}$ via a $W$-dimensional one-hot vector $\mathbf{u}_k \in \{0,1\}^{W \times 1}$, the precoding matrix is expressed as $\mathbf{W} = \mathbf{C}\mathbf{U}$, where $\mathbf{U} = [\mathbf{u}_1, \dots, \mathbf{u}_K] \in \{0,1\}^{W \times K}$ represents the beam selection matrix, satisfying $\mathbf{U}^\top\mathbf{1}_W = \mathbf{1}_K$.

Let $\mathbf{h}_k \in \mathbb{C}^{N_t \times 1}$ stand for the channel between the RSU and the $k$-th user. The received signal is expressed as $y_k = \mathbf{h}_k^H \mathbf{w}_k \sqrt{p_k} s_k + \sum_{i \neq k}^{K} \mathbf{h}_k^H\mathbf{w}_i \sqrt{p_i} s_i + n_k$, where $n_k \sim \mathcal{CN}(0, \sigma^2)$ is the additive white Gaussian noise. Accordingly, the achievable rate of the $k$-th user can be written as:
\begin{equation}
R_k=\log_2\!\Big(1 + \frac{p_k|\mathbf{h}_k^H \mathbf{w}_k|^2}{\sum_{i \neq k} p_i| \mathbf{h}_k^H\mathbf{w}_i|^2 + \sigma^2}\Big).
\label{eq2}
\end{equation}

\noindent To design $\mathbf{W}$ and $\mathbf{P}$, the RSU typically transmits reference signals to each user via defined beams in the codebook and then collects corresponding received signal strength (RSS) vectors for CE, which are denoted as $\mathbf{r}_k = \left[ |\mathbf{h}_k^H \mathbf{c}_1|^2, \ldots, |\mathbf{h}_k^H \mathbf{c}_W|^2 \right]^\top$. To mitigate the delay and feedback overhead incurred by this process, we propose that the user-$k$ utilizes a NN ${\mathcal{S}}(\cdot)$ with parameters $\mathbf{\Theta}_s$ to directly predict its feedback $\mathbf{v}_k$ based on the multi-modal sensing data $\mathbb{D}_k$. The feedback vector $\mathbf{v}_k \in \{0,1\}^W$ is generated via a binary quantization mapping $\mathcal{Q}: \mathbb{R}^W \to \{0,1\}^W$, such that the $i$-th entry is given by $[\mathbf{v}_k]_i = \mathbbm{1}([\mathbf{r}_k]_i > 0)$, where $\mathbbm{1}(\cdot)$ denotes the indicator function. It requires only $W$ bits to transmit, incurring much smaller overhead than transmitting $\mathbf{r}_k$. After collecting $\mathbf V = [\mathbf{v}_1, \dots, \mathbf{v}_K] \in \{0,1\}^{W \times K}$, the RSU employs a NN parameterized by $\mathbf{\Theta}_g$ to map the feedback graph directly to a feasible beamforming and power space, which is expressed as $\mathcal{G}(\cdot): \{0,1\}^{W \times K} \to \mathcal{U} \times \mathcal{P}$.

Under this model, the task of maximizing the total sum rates is formulated as:
\begin{subequations}
\begin{align}
\label{eq3}
\max_{\mathbf{\Theta}_s, \mathbf{\Theta}_g} \quad
& \sum_{k=1}^{K} R_k\\
\text{s.t.}\quad
& \mathbf{v}_k = {\mathcal{S}}(\mathbb{D}_k), \quad \forall k, \label{eq3b}\\
& (\mathbf{U}, \mathbf{P}) = {\mathcal{G}}(\mathbf V), \label{eq3c}\\
& \mathrm{Tr}(\mathbf{P}) \leq P_{\rm max}, \label{eq3d}\\
& \mathbf{U}^\top\mathbf{1}_W = \mathbf{1}_K. \label{eq3e}
\end{align}
\end{subequations}
Let $\mathcal{Q}$ be the set of available sensing modalities. For each vehicle within the RSU's coverage $k \in \mathcal{K}$, the local sensing profile is defined by the subset $\mathcal{Q}_k \subseteq \mathcal{Q}$. The observed data is represented as a multimodal tuple $\mathbb{D}_k = \{ \mathbf{x}_k^{(q)} \}_{q \in \mathcal{Q}_k}$, where each $\mathbf{x}_k^{(q)}$ resides in a modality-specific Hilbert space $\mathcal{H}_q$. For notation simplicity, we use $\mathbf{T} \triangleq \mathbf{UP}^{1/2}$.

\vspace{-10pt}
\section{Distributed Graph Beam Alignment Scheme}
In this section, we introduce the proposed GBA scheme, comprising GBA-RSU and GBA-Vehicle. The tailored training and execution procedures of GBA are presented as follows.

\vspace{-10pt}
\subsection{GBA-Vehicle Design}

To model the relationship between vehicle-side multi-modal sensing data and the corresponding feedback, we propose GBA-Vehicle. This architecture comprises modality-specific extraction, along with a fusion branch. Specifically, a residual network (ResNet) based branch $\mathcal{S}^{\rm R}(\cdot)$ extracts multipath features from RGB images; a convolutional neural network (CNN)-based branch $\mathcal{S}^{\rm L}(\cdot)$ processes the pre-processed 3D LiDAR data\footnote{The original point clouds are partitioned into cuboidal regions based on distance from the RSU, and each region is assigned a unique label. Blocking obstacles are denoted by $1$, the BS by $-1$, and the receiver by $-2$\cite{SRGD-TMC-24}.} to capture the spatial structural information of the communication environment; and another CNN-based branch $\mathcal{S}^{\rm G}(\cdot)$ extracts line-of-sight (LoS) geometric features from the GPS data. Let $\mathbf{f}_k^{\rm R} \in \mathbb{R}^{L_{\rm R}}$, $\mathbf{f}_k^{\rm L} \in \mathbb{R}^{L_{\rm L}}$, $\mathbf{f}_k^{\rm G} \in \mathbb{R}^{L_{\rm G}}$ denote the RGB, LiDAR, and GPS features of the $k$-th vehicle, respectively. The extraction processes are expressed as:
\begin{equation}
    \label{eq11_GBAv}
    \begin{alignedat}{3}
    \mathbf{f}_k^{\rm G} &{}= \mathcal{S}^{\rm G}(\mathbf{x}^{\rm G}_k), &\quad& \mathcal{S}^{\rm G}: &\; & \mathbb{R}^{2} \mapsto \mathbb{R}^{L_{\rm G} \times 1},\\
    \mathbf{f}_k^{\rm R} &{}= \mathcal{S}^{\rm R}(\mathbf{x}^{\rm R}_k), &\quad& \mathcal{S}^{\rm R}: &\; & \mathbb{R}^{d_0^{\rm R}\times d_1^{\rm R}\times d_2^{\rm R}} \mapsto \mathbb{R}^{L_{\rm R} \times 1},\\
    \mathbf{f}_k^{\rm L} &{}= \mathcal{S}^{\rm L}(\mathbf{x}^{\rm L}_k), &\quad& \mathcal{S}^{\rm L}: &\; & \mathbb{R}^{d_0^{\rm L}\times d_1^{\rm L}\times d_2^{\rm L}} \mapsto \mathbb{R}^{L_{\rm L} \times 1}.
    \end{alignedat}
\end{equation}
where $(d_0^{\text{R}} \times d_1^{\text{R}} \times d_2^{\text{R}})$ and $(d_0^{\text{L}} \times d_1^{\text{L}} \times d_2^{\text{L}})$ give the dimensions of the RGB and the preprocessed LiDAR data. Then, they are fused by $\mathbf{f}_k^{{\rm F}} = \mathbf{f}_k^{{\rm G}} \oplus \mathbf{f}_k^{{\rm R}} \oplus \mathbf{f}_k^{{\rm L}}$, which is fed into a multi-layer perceptron (MLP)-based branch $\mathcal{S}^{\rm F}(\cdot)$ to predict the feedback:
\begin{equation}
\label{eq46_GBAv}
    \hat{\mathbf{v}}_k= \mathcal{S}^{\rm F}(\mathbf{f}_k^{\rm F}), \quad \mathcal{S}^{\rm F}: \mathbb{R}^{(L_{\rm G}+L_{\rm R}+L_{\rm L}) \times 1} \mapsto \mathbb{R}^{W \times 1}.
\end{equation}
Then, the total feedback vectors are transmitted to the RSU, where an alignment strategy is generated by the GBA-RSU.

\begin{figure}[!t]
    \centering
    \includegraphics[width=0.85\columnwidth]{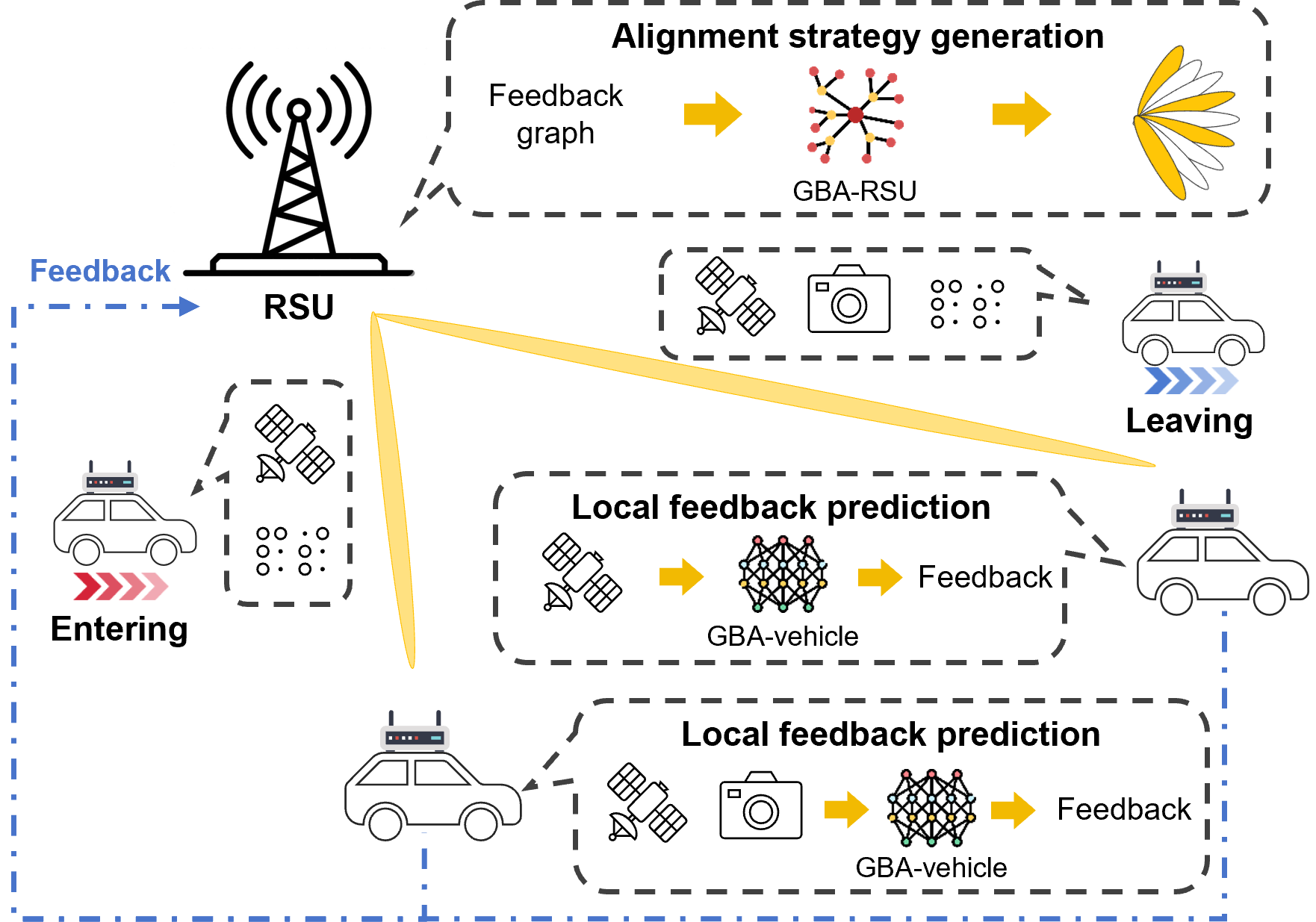}
    \caption{An illustration of the system model.}
    \label{fig1}
\end{figure}

\vspace{-10pt}
\subsection{GBA-RSU Design}

To map the feedback graph to the optimal pair $(\mathbf{U}, \mathbf{P})$ for varying numbers of vehicles, we design a GNN-based GBA-RSU. We begin by proving in the following proposition that the alignment strategy generation problem formulated in Eq. \eqref{eq3c}, subject to the constraints in Eqs. \eqref{eq3d}-\eqref{eq3e}, is invariant under any permutation of the vehicle indices.

\begin{proposition}
\label{proposition1}
\textit{[Permutation Invariance]:}
Let $\Pi \in \{0, 1\}^{K \times K}$ be an arbitrary permutation matrix. The feasible set of the beam alignment problem defined in (3c) is invariant under the transformation $(\mathbf{U}, \mathbf{P}) \to (\mathbf{U}\Pi, \Pi^\top \mathbf{P} \Pi)$. Furthermore, the total sum-rate objective function preserves this invariance, such that:
\begin{equation}
\sum_{k=1}^{K}R_k(\mathbf{U}\Pi,\Pi^\top \mathbf{P} \Pi) = \sum_{k=1}^{K} R_k(\mathbf{U}, \mathbf{P})
\end{equation}
\end{proposition}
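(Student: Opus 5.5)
The argument is a direct computation in two parts: first feasibility, then the objective. Throughout, let $\Pi$ correspond to a permutation $\pi$ of $\{1,\dots,K\}$, so that $[\mathbf{U}\Pi]_{:,k} = \mathbf{u}_{\pi(k)}$ and $\Pi^\top \mathbf{P}\Pi = \mathrm{diag}(p_{\pi(1)},\dots,p_{\pi(K)})$. Fixing the convention for the direction of $\pi$ at the start is the only place requiring care. Whichever direction is chosen, the same $\pi$ must act on the columns of $\mathbf{U}$ and on the diagonal of $\mathbf{P}$. The identity $\Pi^\top \mathbf{P} \Pi = \mathrm{diag}(p_{\pi(k)})$ follows by expanding entrywise, using $\Pi^\top\Pi=\mathbf{I}_K$ and the fact that $\Pi$ has exactly one unit entry per row and column.

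\textbf{Feasibility.} Each column of $\mathbf{U}\Pi$ is some $\mathbf{u}_{\pi(k)}$, which is still a one-hot vector in $\{0,1\}^W$. Moreover, $(\mathbf{U}\Pi)^\top \mathbf{1}_W = \Pi^\top \mathbf{U}^\top \mathbf{1}_W = \Pi^\top \mathbf{1}_K = \mathbf{1}_K$, so (3e) holds. The matrix $\Pi^\top\mathbf{P}\Pi$ is diagonal with nonnegative entries. Its trace satisfies $\mathrm{Tr}(\Pi^\top\mathbf{P}\Pi)=\mathrm{Tr}(\mathbf{P}\Pi\Pi^\top)=\mathrm{Tr}(\mathbf{P})\le P_{\max}$ by cyclicity of the trace, so (3d) holds. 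Since $\Pi^{-1}=\Pi^\top$ is also a permutation matrix, applying the map with $\Pi^\top$ inverts it. The map is therefore a bijection of the feasible set onto itself, which gives invariance of the set rather than mere inclusion.

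\textbf{Objective.} The key point is that the permutation must act on the users jointly: the channels $\mathbf{h}_k$ (equivalently the feedback $\mathbf{V}\to\mathbf{V}\Pi$) are relabelled together with the beams and powers. Under the permuted strategy, user $k$ is served by $(\mathbf{w}_{\pi(k)},p_{\pi(k)})$ and sees channel $\mathbf{h}_{\pi(k)}$. Its SINR in \eqref{eq2} is then
\begin{equation*}
\frac{p_{\pi(k)}|\mathbf{h}_{\pi(k)}^H\mathbf{w}_{\pi(k)}|^2}{\sum_{i\neq k}p_{\pi(i)}|\mathbf{h}_{\pi(k)}^H\mathbf{w}_{\pi(i)}|^2+\sigma^2}.
\end{equation*}
Because $\pi$ is a bijection, the interference index set $\{\pi(i):i\ne k\}$ equals $\{j: j\ne \pi(k)\}$. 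Hence the permuted rate of user $k$ equals $R_{\pi(k)}(\mathbf{U},\mathbf{P})$. Summing over $k$ and reindexing with $j=\pi(k)$ gives $\sum_k R_{\pi(k)}=\sum_j R_j$, which is the claimed identity.

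The main obstacle is interpretive rather than technical. If the channels are held fixed while only $(\mathbf{U},\mathbf{P})$ are permuted, the sum rate is generally not invariant: assigning beam $j$ to a user whose channel is $\mathbf{h}_k$ changes the SINR. I would therefore state explicitly that the transformation is the simultaneous relabelling of vehicles, with $\mathbf{h}_k\to\mathbf{h}_{\pi(k)}$ and $\mathbf{V}\to\mathbf{V}\Pi$. With that convention the proposition is exactly permutation equivariance of the map $\mathcal{G}$ together with invariance of the objective, which is what motivates the GNN design of GBA-RSU.
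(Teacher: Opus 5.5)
Your proof is correct and takes essentially the same approach as the paper. For the constraints, both use trace invariance, $\mathrm{Tr}(\Pi^\top\mathbf{P}\Pi)=\mathrm{Tr}(\mathbf{P})$, and $\Pi^\top\mathbf{1}_K=\mathbf{1}_K$; for the objective, both relabel all user quantities, channels included, so the sum over users is only reordered. Your version is more explicit than the paper's terse argument: you write out the permuted SINR, check the interference index set, and note that the map is a bijection, but the underlying argument is identical.
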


\emph{Proof}: See Appendix \ref{P1}.

\noindent This invariance implies that the optimal output should transform consistently with any permutation of vehicles. Specifically, when the input is permuted as $\mathbf{V}\mathbf{\Pi}$, the optimal pair should transform as $(\mathbf{U}^\star\mathbf{\Pi}, \mathbf{\Pi}^\top\mathbf{P}^\star\mathbf{\Pi})$, and the corresponding output becomes $\mathbf{T}^\star\mathbf{\Pi}=\mathbf{U}^\star\mathbf{P}^\star\mathbf{\Pi}$. Hence, the GBA-RSU $\mathcal{G}(\cdot)$ should satisfy $\mathbf{T}^\star\mathbf{\Pi}=\mathcal{G}(\mathbf{V}\mathbf{\Pi})$ to handle an arbitrary number of active links. Accordingly, it can be designed as a GNN \cite{WYHW-TMLCN-24}.

To design the GBA-RSU, a graph is consturcted for dynamic V2X network, where the interference relationships among vehicles change dynamically due to mobility, entering and leaving. Specifically, let $\mathcal Z=(\mathcal K,\mathcal E)$ denote the dynamic graph. The vertex set is given by the vehicle set $\mathcal{K}$, and the feature of vertex $k$ is denoted by $\mathbf{v}_k$. The edge set $\mathcal E$ represents the interference links formed when two vehicles share at least one beam candidate. The adjacency matrix is denoted as:
\begin{equation}
\label{eq6}
[\mathbf{A}]_{ij} =
\begin{cases}
1, & [\mathbf{\mathbf{V}^\top\mathbf{V}}]_{ij} \ge 1 \text{ and } i \neq j,\\
0, & \text{otherwise},
\end{cases}
\end{equation}
Since interference is bidirectional, all edges within the graph $\mathcal Z$ are undirected. The feature associated with edge $(i,j)$ is defined by concatenating the features of its two vertices, denoted by $\mathbf{v}_{i,j} = \mathbf{v}_i \oplus \mathbf{v}_j \in \mathbb{R}^{2W}$.

The constructed graph is then fed into the GBA-RSU for alignment strategy generation, as illustrated in Fig. \ref{fig2}. The GBA-RUS comprises three components: an edge encoding layer, a vertex encoding layer, and a beam projection layer. The details are described as follows.

\begin{figure*}[!t]
    \centering
    \includegraphics[width=0.85\textwidth]{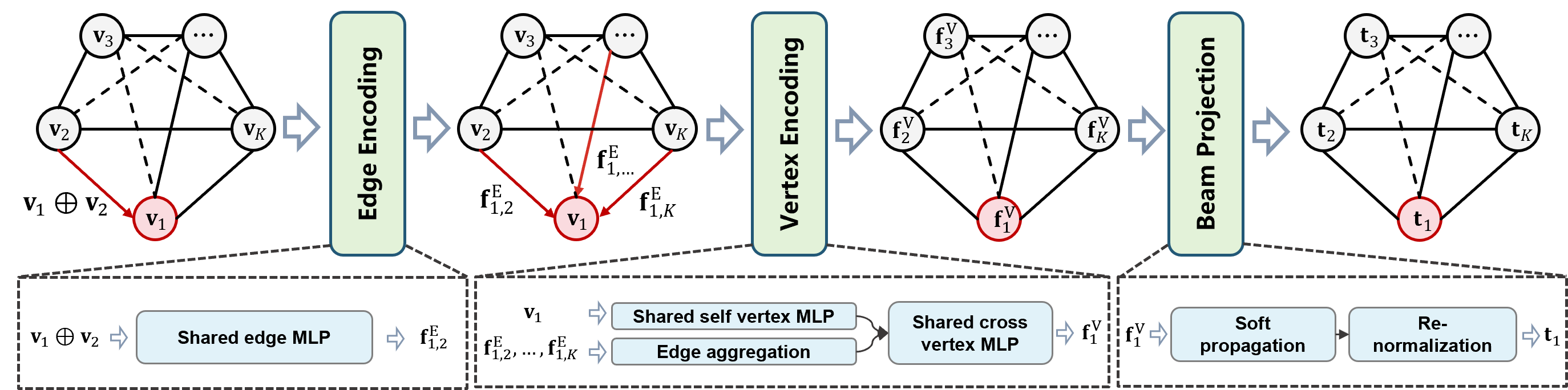}
    \caption{The structure of the GBA-RSU.}
    \label{fig2}
    \vspace{-18pt}
\end{figure*}

\subsubsection{\textbf{Edge Encoding Layer}}
For the $k$-th vertex, the neighbor edge set is constructed by $\mathcal{E}_k=\{e_{k,1},..., e_{k,K}\}$, where $e_{k,i}=1$ indicates the existence of an edge between vertex-$k$ and vertex-$i$. The interference information of the $k$-th vertex is extracted by the an edge MLP, yielding $\mathbf{f}^{\mathrm{E}}_{k,i}=\mathcal{G}^{\mathrm{E}}(\mathbf{v}_{k,i}) \in \mathbb{R}^{d_{\mathrm{g}}}$, where $d_{\mathrm{g}}$ denotes the hidden dimension of the GBA-RSU. The egde MLP is shared across all edges to ensure scalability.

\subsubsection{\textbf{Vertex Encoding Layer}}
At this layer, the extracted interference information of $k$-th vertex is aggregated as:
\begin{equation}
\label{eq7}
\overline{\mathbf{f}^{\text{E}}_k}=\frac{\sum_{(k,i)\in \mathcal{E}_k}\mathbf{f}^{\text{E}}_{k,i}}{|\mathcal{E}_k|}.
\end{equation}
Simultaneously, the intrinsic vertex feature is processed by a shared self-vertex MLP, denoted as $\mathbf{f}_k^{\text{SV}}=\mathcal{G}^{\rm SV}(\mathbf v_k) \in \mathbb{R}^{d_{\rm g}}$, which is then concatenated with the aggregated edge information to form $\mathbf{f}_k^{\mathrm{EV}} =\mathbf{f}_k^{\text{SV}} \oplus \overline{\mathbf{f}^{\mathrm{E}}_k}$. Finally, the vertex representation is updated via $\mathbf{f}^{\mathrm{V}}_k=\mathcal{G}^{\mathrm{CV}}(\mathbf{f}_k^{\mathrm{EV}})$, where $\mathcal{G}^{\mathrm{CV}}(\cdot)$ represents the shared cross-vertex MLP.

\subsubsection{\textbf{Beam Projection Layer}}

The beam projection layer is tailored for mapping the updated vertex features to a feasible alignment strategy $\mathbf{T}=\mathbf{UP}^{1/2}$. At this layer, each vertex feature is projected back to the $W$-dimensional vector via $\mathbf{z}_k^{\rm V} = \mathcal{G}^{\rm V}_{\rm I}(\mathbf{f}^{\text{V}}_k)$. However, these results violate the constraints in Eq.~\eqref{eq3d} and Eq.~\eqref{eq3e} since they are multi-hot vectors with some negative elements. To address this problem, the soft propagation and re-normalization modules are created. Specifically, for soft propagation module, $\mathbf{z}_k^{\rm V}$ is enforced to be non-negative by $\mathbf{z}_k^{\text{non}} = (\mathbf{z}^{\mathrm{V}}_k \odot \mathbf{z}^{\mathrm{V}}_k)^{1/2}$. Then, the hard mask $\mathbf{m}_k^{\text{hard}}$ is generated via an argmax operation on $\mathbf{m}_k^{\text{hard}}$ and the non-negative one-hot output is obtained via $\mathbf{z}_k^{\rm hard} = \mathbf{z}_k^{\text{non}} \odot \mathbf{m}_k^{\text{hard}}$, satisfying the constraints in Eq.~\eqref{eq3e}. However, the above mask operation is non-differentiable and thus breaks the gradient backpropagation during NN training. To solve this problem, we make a differentiable approximation of $\mathbf{m}_k^{\text{hard}}$ by introducing a soft mask via the softmax operation:
\begin{equation}
\label{eq8}
    [\mathbf{m}_k^{\text{soft}}]_i = \frac{\exp({\frac{[\mathbf{z}_k^{\text{non}}]_i}{\tau}})}{\sum_{j=1}^{W} \exp({\frac{[\mathbf{z}_k^{\text{non}}]_j}{\tau}})}, \quad i = 1,2,\ldots,W,
\end{equation}
where $\tau$ is a temperature parameter controlling the smoothness of selection. This approximation is theoretically justified by the asymptotic behavior of the softmax operator. Assuming that the maximum entry of $\mathbf{z}_k^{\text{non}}$ is unique, $\mathbf{m}_k^{\text{soft}}$ converges to $\mathbf{m}_k^{\text{hard}}$ as $\tau \to 0^+$. In particular, after normalizing Eq.~\eqref{eq8} by the exponential term associated with the maximum entry, the dominant term remains $1$, while all other terms vanish due to strictly negative exponents. Assisted by such soft mask, the mixed mask $\mathbf{m}_k^{\text{mix}}$ is created by $\mathbf{m}_k^{\text{mix}} = (\mathbf{m}_k^{\text{hard}} - \mathbf{m}_k^{\text{soft}})_{\text{detach}} + \mathbf{m}_k^{\text{soft}}$ to separate non-differentiable forward propagation and differentiable backward propagation\footnote{In the forward propagation, $\mathbf{m}^{\text{mix}}$ is numerically equivalent to $\mathbf{m}_k^{\text{hard}}$. In the backward propagation, the gradients are computed via $\mathbf{m}^{\text{soft}}_k$, enabling differentiability.}, where $(\cdot)_{\text{detach}}$ denotes the stop-gradient operation. Currently, the output is expressed as $\hat{\mathbf{z}}_k=\mathbf{z}_k^{\text{hard}} \odot \mathbf{m}_k^{\text{mix}}$, which complies with the constraint in Eq.~\eqref{eq3e}. The re-normalization module is applied on $\hat{\mathbf{z}}_k$ to satisfy the constraint in Eq.~\eqref{eq3d}. Specifically, we treat $\|\hat{\mathbf{z}}_k\|_2$ as the initial transmit power of the $k$-th vehicle, and dynamically prune vertices with power occupancy below $0.05\%$, which typically provide limited sum rates gains while introducing additional interference into the system. The remaining vectors are then normalized as the predicted alignment strategy by:
\begin{equation}
\hat{\mathbf{t}}_k = \sqrt{\frac{P_{\text{max}}}{\sum_{j=1}^{K} \|\hat{\mathbf{z}}_j\|_2^2}}~\hat{\mathbf{z}}_k,
\end{equation}
which satisfies the constrains in Eq.~\eqref{eq3d} and Eq.~\eqref{eq3e}.

Recalling that the GBA-RSU is required to satisfy permutation equivariance with respect to the vertex ordering, we present the following proposition.
\begin{proposition}
\label{Proposition2}
\textit{[Permutation Equivariance]:}
For any input feedback matrix $\mathbf{V} \in \mathbb{R}^{W \times K}$ and any permutation matrix $\Pi \in \{0, 1\}^{K \times K}$, the mapping $\mathcal{G}(\cdot)$ defined by GBA-RSU is permutation equivariant. Specifically, the output alignment strategy satisfies:
\begin{equation}
\mathcal{G}(\mathbf{V}\mathbf{\Pi}) = \mathcal{G}(\mathbf{V})\mathbf{\Pi}.
\end{equation}
\end{proposition}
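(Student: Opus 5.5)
The plan is to write $\mathcal{G}$ as a composition of layer maps and show that each one commutes with the permutation. For a permutation matrix $\mathbf{\Pi}$, write $\pi$ for the associated index map, so that $[\mathbf{V}\mathbf{\Pi}]_{:,k} = \mathbf{v}_{\pi(k)}$. Every intermediate quantity is a vertex-indexed or edge-indexed family. We will say such a family is \emph{equivariant} if permuting the input columns by $\pi$ relabels the family the same way: vertex $k$ picks up the old value at $\pi(k)$, and edge $(k,i)$ picks up the old value at $(\pi(k),\pi(i))$. Equivariance of each stage composes, which gives $\mathcal{G}(\mathbf{V}\mathbf{\Pi}) = \mathcal{G}(\mathbf{V})\mathbf{\Pi}$.

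\textbf{Graph construction.} Under $\mathbf{V}\to\mathbf{V}\mathbf{\Pi}$ we have $(\mathbf{V}\mathbf{\Pi})^\top(\mathbf{V}\mathbf{\Pi}) = \mathbf{\Pi}^\top(\mathbf{V}^\top\mathbf{V})\mathbf{\Pi}$. Thresholding in Eq.~\eqref{eq6} acts entrywise, and the diagonal maps to the diagonal, so the new adjacency is $\mathbf{A}' = \mathbf{\Pi}^\top\mathbf{A}\mathbf{\Pi}$. It follows that $\mathcal{E}'_k = \pi^{-1}$-relabeled $\mathcal{E}_{\pi(k)}$ and $|\mathcal{E}'_k| = |\mathcal{E}_{\pi(k)}|$. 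The edge features also relabel correctly: $\mathbf{v}'_{k,i} = \mathbf{v}_{\pi(k)}\oplus\mathbf{v}_{\pi(i)}$.

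\textbf{Encoding layers.} The edge MLP $\mathcal{G}^{\rm E}$ is shared across all edges, so $\mathbf{f}'^{\rm E}_{k,i} = \mathbf{f}^{\rm E}_{\pi(k),\pi(i)}$. The mean aggregation in Eq.~\eqref{eq7} sums over a set and divides by its cardinality, so it is invariant to how neighbors are ordered; hence $\overline{\mathbf{f}'^{\rm E}_k} = \overline{\mathbf{f}^{\rm E}_{\pi(k)}}$. The self-vertex, cross-vertex and projection MLPs $\mathcal{G}^{\rm SV},\mathcal{G}^{\rm CV},\mathcal{G}^{\rm V}_{\rm I}$ are shared and act vertexwise, so $\mathbf{z}'^{\rm V}_k = \mathbf{z}^{\rm V}_{\pi(k)}$.

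\textbf{Beam projection layer.} The following operations act row by row on each vertex's own $W$-vector: the elementwise square root, argmax, softmax, the straight-through mask, and the Hadamard products. Each is therefore equivariant, and $\hat{\mathbf{z}}'_k = \hat{\mathbf{z}}_{\pi(k)}$.

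This is the step I expect to be the main obstacle, because it is the only place where vertices are coupled. Both the pruning rule (power share $\|\hat{\mathbf{z}}_k\|_2^2/\sum_j\|\hat{\mathbf{z}}_j\|_2^2$ below $0.05\%$) and the renormalization factor depend on $\sum_{j}\|\hat{\mathbf{z}}_j\|_2^2$. I would argue that this sum is symmetric in the vertex index, so it is unchanged by the relabeling. The pruning mask is then itself an equivariant vertexwise decision, and the common scaling $\sqrt{P_{\max}/\sum_j\|\hat{\mathbf{z}}_j\|_2^2}$ is identical in both cases. This gives $\hat{\mathbf{t}}'_k = \hat{\mathbf{t}}_{\pi(k)}$, i.e. $\mathcal{G}(\mathbf{V}\mathbf{\Pi}) = \mathcal{G}(\mathbf{V})\mathbf{\Pi}$.

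Argmax ties would need a fixed tie-breaking rule within each vertex vector. This does not break equivariance, because ties are resolved over beam indices, not vertex indices. I would state this as a remark.
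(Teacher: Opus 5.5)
Your proposal is correct and takes essentially the same route as the paper's proof. Both argue stage by stage: graph construction ($\mathbf{A}\mapsto\Pi^\top\mathbf{A}\Pi$), then the shared edge/vertex MLPs with order-invariant mean aggregation, then the beam projection layer, whose only cross-vertex coupling is the symmetric sum $\sum_j\|\hat{\mathbf{z}}_j\|_2^2$; your explicit handling of pruning, argmax tie-breaking, and the index convention $[\mathbf{V}\Pi]_{:,k}=\mathbf{v}_{\pi(k)}$ is more careful than the paper's appendix but does not change the argument.
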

This equivariance holds for each each layer of GBA-RSU, including edge encoding, vertex encoding, and beam projection.

\emph{Proof}: See Appendix \ref{P2}.

\textit{Remark 1:} Unlike MLPs with a fixed input dimension, the proposed GBA-RSU architecture leverages the shared feature processing and permutation equivariance of GNNs. This ensures that the RSU generalizes well to different numbers of vehicles and maintains consistent alignment performance regardless of vehicle indexing order.


\vspace{-8pt}
\subsection{Offline Training Procedure of GBA scheme}
The offline training procedure of GBA comprises three stages, which are detailed below.

\textbf{[Stage 1] GBA-RSU training:} The RSU trains GBA-RSU using the ground truth feedback. To improve its scalability to varying network topologies, a stochastic topology perturbation strategy is adopted during training to emulate the random arrival and departure of vehicles. Specifically, in the $j$-th iteration, let $\mathbf D \in \{0,1\}^{W \times K}$ denote the binary dropping matrix, where each entry $[\mathbf{D}]_{:,k}$ takes the value of $\mathbf{0}_W$ with a dropping probability of $p_{\text{drop}}$. The perturbed feedback matrix of vehicles is represented as $\tilde{\mathbf V}=\mathbf D \odot \mathbf V$. Accordingly, the RSU constructs the adjacency matrix $\tilde{\mathbf A}$ via Eq.~\eqref{eq6}. The beam alignment strategy is generated by $\hat{\mathbf{T}} = \mathcal{G}(\tilde{\mathbf A}, \tilde{\mathbf V})$ and then used to calculate the loss function $\mathcal{L_{\rm GBA\text{-}R}}$ denoted by Eq.~\eqref{eq3}. The parameters are updated by $\mathbf{\Theta}_g^j=\mathbf{\Theta}_g^{j-1}-\eta_g\nabla_{\mathbf{\Theta}_g^j} \mathcal{L_{\rm GBA\text{-}R}}$, where $\eta_g$ is the learning rate of GBA-RSU training.

\begin{algorithm}[t!]
\caption{Offline Training Procedure of GBA Scheme}
\label{alg1}
\underline{\textbf{Input:}} Local multi-modal dataset of vehicles $\{\mathbb{D}_k\}_{k=1}^K$, \\
\phantom{\underline{\textbf{Input:}}} Initialized $\mathbf{\Theta}_g$ and $\{\mathbf{\Theta}^{\text{F}}_{s,k}, \{\mathbf{\Theta}_s^{q}\}_{q \in \mathcal{Q}_k}\}_{k=1}^K$. \\
\underline{\textbf{Output:}} Trained GBA-RSU $\mathbf{\Theta}_g$, trained GBA-Vehicle $\mathbf{\Theta}_s$.
\begin{algorithmic}[1]
\Statex // \textit{Stage 1: GBA-RSU training stage}
\For {$j = 1,..., N_{\text{tra}}$} \textbf{do}
  \State $[\mathbf{D}]_{:,k} \sim \text{Bernoulli}(p_{\text{drop}})$.
  \State $\tilde{\mathbf V} \gets \mathbf D \odot\mathbf V$ and extract $\tilde{\mathbf{A}}$ via Eq.~\eqref{eq6}.
  \State $\hat{\mathbf{T}} \gets \mathcal{G}(\tilde{\mathbf{A}}, \tilde{\mathbf{V}}; \mathbf{\Theta}_g^j)$;
  \State $\nabla_{\mathbf{\Theta}_g^j} \mathcal{L}_{\text{GBA-R}}^j \gets \frac{\partial \mathcal{L}_{\text{GBA-R}}^j}{\partial \tilde{\mathbf V}}\frac{\partial \tilde{\mathbf V}}{\partial \mathbf{\Theta}_g^j}$;
  \State $\mathbf{\Theta}_g^j\gets\mathbf{\Theta}_g^{j-1}-\eta_g\nabla_{\mathbf{\Theta}_g^j} \mathcal{L}_{\text{GBA-R}}^j$;
\EndFor

\Statex // \textit{Stage 2: Distributed GBA-Vehicle training}
\For {$t = 1,..., N_{\text{glob}}$} \textbf{do}
  \For {\textbf{all} vehicle $k$ \textbf{in parallel}}
    \For {$r \gets 1,..., N_{\text{loc}}$} \textbf{do}
            \State $\hat{\mathbf{v}}_k^r \gets \mathcal{S}_k(\mathbb{D}_k;\mathbf{\Theta}_{s,k}^{r,t})$;
            \State $\nabla_{\mathbf{\Theta}_{s,k}^{r,t}} \mathcal{L}_{{\rm BCE},k}^{r,t} \gets \frac{\partial \mathcal{L}_{{\rm BCE},k}^{r,t}}{\partial \hat{\mathbf{v}}_k^r} \frac{\partial \hat{\mathbf{v}}_k^r}{\partial \mathbf{\Theta}_{s,k}^{r,t}}$;
            \State $\mathbf{\Theta}^{\text{F},r+1,t}_{s,k} \gets \mathbf{\Theta}^{\text{F},r,t}_{s,k}- \eta_s\,\nabla_{\mathbf{\Theta}^{\text{F},r,t}_{s,k}}\mathcal{L}_{{\rm BCE},k}^{r,t}$;
            \State $\{\mathbf{\Theta}^{{q},r+1,t}_{s,k} \!\gets\! \mathbf{\Theta}^{{q},r,t}_{s,k}- \eta_s\,\nabla_{\mathbf{\Theta}^{{q},r,t}_{s,k}}\mathcal{L}_{{\rm BCE},k}^{r,t}\}_{q \in \mathcal{Q}_k}$;
        \EndFor
    \State Send $\mathbf{\Theta}^{\text{F},t+1}_{s,k}$ and $\{\mathbf{\Theta}_s^{q,t+1}\}_{q \in \mathcal{Q}_k}$ to the RSU.
    \EndFor
    \State $\{\mathbf{\Theta}_s^{{q},t+1} \gets \frac{1}{\lvert \mathcal{K}_{q} \rvert} \sum_{k=1}^{|\mathcal{K}_{q}|} \mathbf{\Theta}^{{q},t}_{s,k}\}_{q \in \mathcal{Q}}$;
    \State $\mathbf{\Theta}^{\text{F},t+1}_{s} \gets \frac{1}{K} \sum_{k=1}^K \mathbf{\Theta}^{\text{F},t}_{s,k}.$
  \State Broadcast $\{\mathbf{\Theta}_s^{q,t+1}\}_{q \in \mathcal{Q}}$ and $\mathbf{\Theta}^{\text{F},t+1}_{s}$ to each vehicle.
\EndFor

\Statex // \textit{Stage 3: GBA-RSU retraining stage}
\State $\{\hat{\mathbf{v}}_k\}_{k=1}^K \gets \{\mathcal{S}(\mathbb{D}_k)\}_{k=1}^K$ and $\{\hat{\mathbf{v}}_k\}_{k=1}^K \cup \{\mathbf{v}_k\}_{k=1}^K$.
\For {$j = 1,..., N_{\text{r-tra}}$} \textbf{do}
  \State $[\mathbf{D}]_{:,k} \sim \text{Bernoulli}(p_{\text{drop}})$, $[\mathbf{E}]_{w,k} \sim \text{Bernoulli}(p_{\text{error}})$
  \State $\breve{\mathbf{V}} \gets \mathbf D \odot \mathbf{V}^{\text{mix}} \otimes \mathbf{E}$ and extract $\breve{\mathbf{A}}$ via Eq.~\eqref{eq6}.
  \State $\hat{\mathbf{T}} \gets \mathcal{G}(\breve{\mathbf{A}}, \breve{\mathbf{V}}; \mathbf{\Theta}_g^j)$;
  \State $\nabla_{\mathbf{\Theta}_g^j} \mathcal{L}_{\rm GBA\text{-}R}^j \gets \frac{\partial \mathcal{L}_{\rm GBA\text{-}R}^j}{\partial \breve{\mathbf V}}\frac{\partial\breve{\mathbf V}}{\partial \mathbf{\Theta}_g^j}$;
  \State $\mathbf{\Theta}_g^j\gets\mathbf{\Theta}_g^{j-1}-\eta_{g\text{-}r}\nabla_{\mathbf{\Theta}_g^j} \mathcal{L}_{\rm GBA\text{-}R}^j$;
\EndFor
\end{algorithmic}
\end{algorithm}

\textbf{[Stage 2] Distributed training of GBA-Vehicle:} Vehicles jointly train GBA-Vehicle by FL. At the $t$-th communication round, each vehicle performs local training using its local multi-modal sensing data. Due to heterogeneous sensor configurations across vehicles, only the extraction branches corresponding to the available modalities and the fusion branch are updated. Specifically, let $\mathbf{\Theta}^{\text{F}}_{s,k}$ and $\mathbf{\Theta}^{q}_{s,k}$ denote the parameters of the fusion branch and the modality $q$ extraction branch for $k$-th vehicle, respectively. The update processes at the $r$-th local training step are given by
$
\mathbf{\Theta}^{\text{F},r+1}_{s,k}
= \mathbf{\Theta}^{\text{F},r}_{s,k}
   - \eta_s\,
     \nabla_{\mathbf{\Theta}^{\text{F},r}_{s,k}}\mathcal{L}_{{\rm BCE},k}
$
and 
$
\mathbf{\Theta}^{q,r+1}_{s,k}
= \mathbf{\Theta}^{q,r}_{s,k}
   - \eta_s\,\nabla_{\mathbf{\Theta}^{q,r}_{s,k}}\mathcal{L}_{{\rm BCE},k}
$,

where $\eta_s$ is the learning rate. $\mathcal{L}_{{\rm BCE},k}$ denotes the binary cross-entropy (BCE) loss function for feedback prediction:
\vspace{-5pt}
\begin{equation}
\begin{aligned}
\mathcal{L}_{{\rm BCE},k}(\mathbf{v}_k, \hat{\mathbf{v}}_k)
= -&\sum_{w=1}^{W} \Big( [\mathbf{v}_k]_w\log [\hat{\mathbf{v}}_k]_w \\
&+ (1 - [\mathbf{v}_k]_w)\log\big(1 - [\hat{\mathbf{v}}_k]_w\big) \Big).
\label{eq14_GBA_tmp}
\end{aligned}
\end{equation}
Here, $\mathbf{v}_k$ and $\hat{\mathbf{v}}_k$ are the ground truth and predicted feedback vectors, respectively. After completing the local updates, each vehicle uploads the resulting parameters to the RSU for modality-aware aggregation, denoted by:
\begin{equation}
\mathbf{\Theta}^{q,t+1}_s = \frac{1}{\lvert \mathcal{K}_{q} \rvert} \sum_{k=1}^{|\mathcal{K}_{q}|} \mathbf{\Theta}^{q,t+1}_{s,k},
\quad
\mathbf{\Theta}^{\text{F},t+1}_{s} = \frac{1}{K} \sum_{k=1}^K \mathbf{\Theta}^{\text{F},t+1}_{s,k},
\label{eq19_GBA_tmp}
\end{equation}
where $\mathcal{K}_{q}$ is the set of vehicles equipped with modality $q$. Then, the aggregated parameters are broadcast to all vehicles for the next local training round. After convergence, the trained GBA-Vehicle is deployed on vehicles for feedback prediction.

\textbf{[Stage 3] GBA-RSU retraining:} Since inevitable errors in the predicted feedback may result in incorrect beam candidates for some vehicles, GBA-RSU is misled in generating alignment strategies. To mitigate this impact, the trained GBA-Vehicle is used to produce a predicted feedback dataset $\{\hat{\mathbf{v}}_k\}_{k=1}^K = \{\mathcal{S}(\mathbb{D}_k)\}_{k=1}^K$ to create the mxied dataset $\{\hat{\mathbf{v}}_k\}_{k=1}^K \cup \{\mathbf{v}_k\}_{k=1}^K$ for GBA-RSU retraining. Furthermore, to enhance the robustness against beam prediction errors, a augmentation strategy is used in this stage. Specifically, let $\mathbf E \in \{0,1\}^{W \times K}$ denote a binary matrix where each element takes the value of $1$ with a probability of $p_{\text{error}}$, serving to inject random beam errors. Since the stochastic topology perturbation strategy still remains to emulate the dynamic system, the augmented input is expressed as $\breve{\mathbf{V}} = \mathbf D \odot \mathbf{V}^{\text{mix}} \otimes \mathbf{E}$. The retraining of GBA-RSU continues to utilize the loss function defined in Eq.~\eqref{eq3}. The parameters are updated by $\mathbf{\Theta}_g^j=\mathbf{\Theta}_g^{j-1}-\eta_{g\text{-}r}\nabla_{\mathbf{\Theta}_g^j} \mathcal{L_{\rm GBA\text{-}R}}$, where $\eta_{g\text{-}r}$ is the learning rate of GBA-RSU retraining. 

Owing to the above offline training procedure summarized in Algorithm \ref{alg1}, the error accumulation caused by the separate deployment of GBA-RSU and GBA-Vehicle can be alleviated, thus enhancing their mutual cooperation.

\vspace{-10pt}
\subsection{Online Execution Strategy of GBA Scheme}

To mitigate online performance degradation caused by heterogeneous sensor configurations and intermittent modality unavailability, we design a modality-aware pruned NN deployment strategy. Upon initialization, each vehicle registers its sensor configuration with the RSU, which then assigns the extraction branches and pruned fusion branch corresponding to available modalities to vehicle. Specifically, let $(\mathbf{W}_{\rm F}, \mathbf{b}_{\rm F})$ denote the parameters of the fusion branch in GBA-Vehicle. The feedback prediction of the $k$-th vehicle is denoted as:
\begin{equation}
\mathbf{v}_k
=\mathbf{W}_{\rm F}\mathbf{f}_k^{{\text{F}}}+\mathbf{b}_{\rm F} = \sum_{q \in \mathcal{Q}_k} \mathbf{W}_{\rm F}^{q}\mathbf{f}_k^{{q}}+\mathbf{b}_{\rm F}^{q},
\end{equation}
where $(\mathbf{W}_{\rm F}^{q}, \mathbf{b}_{\rm F}^{q})$ is the parameter block associated with modality $q \in \mathcal{Q}_k$, and $\mathbf{f}_k^{q}$ is the corresponding extracted feature. This strategy ensures that there is no noise introduced by data padding to compensate for missing modalities, while reducing the communication overhead of model distribution. Moreover, if a modality fails, the RSU can dynamically redeploy the adapted components to maintain robustness.

Next, we analyze the computational complexity of the GBA scheme in the online phase, starting from the GBA-Vehicle unit. Let $K_{\rm e}$ and $C_{\rm o}$ denote the kernel size and output channels of CNN. The extraction branches exhibit similar computational complexity, denoted as $\mathcal{O}(C_{\rm o}^{\rm G}K_{\rm e}^2)$, $\mathcal{O}(d_0^{\rm R}d_1^{\rm R}C_{\rm o}^{\rm R}K_{\rm e}^2)$ and $\mathcal{O}(d_0^{\rm L}d_1^{\rm L}d_2^{\rm L}C_{\rm o}K_{\rm e}^2)$, respectively. In contrast, the fusion branch leads to a higher lower of $\mathcal{O}((L_{\rm G}+L_{\rm R}+L_{\rm L})W)$. Consequently, the overall complexity of the GBA-Vehicle is given by $\mathcal{O}(C_{\rm o}^{\rm G}K_{\rm e}^2 + d_0^{\rm R}d_1^{\rm R}C_{\rm o}^{\rm R}K_{\rm e}^2 + d_0^{\rm L}d_1^{\rm L}d_2^{\rm L}C_{\rm o}^{\rm L}K_{\rm e}^2)$. Regarding the proposed GBA-RSU, constructing the adjacency matrix entails a complexity of $\mathcal{O}(K^2W)$ due to the interference graph calculation. In the edge encoding layer, the edge MLP incurs a complexity of $\mathcal{O}(\sum_k^K|\mathcal{E}_k|Wd_{\rm g})$. The complexity of the vertex encoding layer is determined by neighbor aggregation and vertex self/cross MLPs, scaling as $\mathcal{O}(\sum_k^K|\mathcal{E}_k|d_{\rm g})$ and $\mathcal{O}(KWd_{\rm g}+Kd_{\rm g}^2)$, respectively. Additionally, the beam prediction layer contributes $\mathcal{O}(Kd_{\rm g})$. Finally, the computational complexity of GBA-RSU is dominated by $\mathcal{O}(K^2W)$, since the dimension $d_{\rm g}$ and the codebook size $W$ are fixed, and $\sum_k^K|\mathcal{E}_k| \ll K^2$ in practice. In summary, considering the kernel size $K_{\rm e}$ is relatively small, the total computational complexity of GBA is dominated by $\mathcal{O}(K^2W)$. Thus, the computational overhead satisfies real-time requirements.

\vspace{-7pt}
\section{Multimodal Data Augmentation Strategy}

This section first quantifies label and modality imbalances using three metrics. Then, the multimodal data DA strategy, comprising DA- and DA+, is presented. It mitigates modality imbalance through target modality dropout and alleviates label imbalance by data generation. Finally, a tailored coordinated application procedure is detailed.

\vspace{-10pt}
\subsection{Quantification of Multi-modal Data Imbalances}

Modality imbalance arises from heterogeneous sensor configurations across vehicles. Therefore, we use the modality completeness rate to quantify the volume discrepancy among different modalities. When training on such imbalanced data, the branch corresponding to modality with higher completeness rate dominates the gradient updates, thereby suppressing the optimization of other branches. Thus, we give the modality contribution ratio to evaluate the balance of updates among branches. Let $N_k^{q}$ denote the number of samples of modality $q$ on $k$-th vehicle. Both metrics are described as follows.

\textbf{Modality Completeness Rate}: Let $N_k^{q^\star}$ represent the maximum sample count among all modalities for $k$-th vehicle, the modality completeness rate is described as $\kappa^q_k = \frac{N_{k}^q}{N_{k}^{q^\star}}$.
\vspace{-5pt}
\begin{equation}
\kappa^q_k = \frac{N_{k}^q}{N_{k}^{q^\star}}, \quad 0 \le \kappa \le 1.
\label{equ4}
\end{equation}
The completeness rate equals $1$ if the corresponding sensor is available. When partial data are lost due to sensor malfunction, the rate falls between $0$ and $1$. Accordingly, the global modality completeness rate is expressed as $\kappa^q = \frac{1}{K}\sum_{k=1}^K\kappa^q_k$.

\textbf{Modality Contribution Ratio}: For each modality in $\mathcal{Q}_k$, the uni-modality based feedback is predicted by $\hat{\mathbf{v}}^{q}_k=\mathbf{W}_{\rm F}\mathbf{f}_k^{q}+\mathbf{b}_{\rm F}$ when the corresponding sensor is available. Based on it, GBA-RSU can generate the alignment strategy $\hat{\mathbf{t}}^{q}_k$ and can get achievable rate $R^{q}_k$. If the corresponding sensor is unavailable, the obtained achievable rate is set to $0$. Thus, the modality contribution ratio of $i$-th modality in $\mathcal{Q}_k$ is given by:
\begin{equation}
\varphi^{i}_k = \frac{1}{|\mathcal{Q}_k|-1} \sum_{i=1, j \neq i}^{|\mathcal{Q}_k|}\frac{R^{i}_k}{R^{j}_k}.
\label{eq5}
\end{equation}
If each $\varphi^{i}_k$ is always $1$ during training, the gradients are distributed uniformly across all branches. Otherwise, the gradients are biased towards the branch with higher contribution ratio, degrading the robustness of GBA-Vehicle.

Label imbalance arises from different trajectories of vehicles. Training on such data leads to biased global aggregation, where the trained GBA-Vehicle overlooks features associated with underrepresented labels. To quantify this impact, we introduce the label overlap proportion. Let $M=2^W$ denote the number of label combinations. The local data distribution for the $k$-th vehicle can be characterized by $\mathcal{N}_k = \{ N_k^{1}, N_k^{2}, \ldots, N_k^{M} \}$, where $N_{k}^{m}$ represents the number of samples belonging to label combination $m$ on the $k$-th vehicle. The global distribution is defined as $\mathcal{N}_g = \{ N_g^{1}, \ldots, N_g^{M} \}$, where $N_g^{m} = \sum_{k=1}^{K} N_k^{m}$ denotes the total samples for combination $m$. The label overlap proportion is detailed below.

\textbf{Label Overlap Proportion}: Given $\tilde{N}_i^{m}$ and $\tilde{N}_j^{m}$ denote the normalized occupancy ratios of label combination $m$ on vehicles $i$ and $j$, the label overlap rate is defined as:
\begin{equation}
\zeta = \frac{2}{K(K-1)} \sum_{i=1}^{K-1} \sum_{j=i+1}^{K} \sum_{m=1}^{M} \min\big(\tilde{N}_i^{m}, \tilde{N}_j^{m}\big).
\label{eq21}
\end{equation}
If $\zeta=1$, label distributions of different local datasets are homogeneous. In contrast, they are heterogeneous.

Guided by these metrics, we propose the DA- strategy to mitigate modality imbalance by suppressing dominant modality updates through dynamic modality dropping, and the DA+ strategy to alleviate label imbalance by equalizing the distribution through data generation.

\begin{figure*}[!t]
    \centering
    \includegraphics[width=0.95\textwidth]{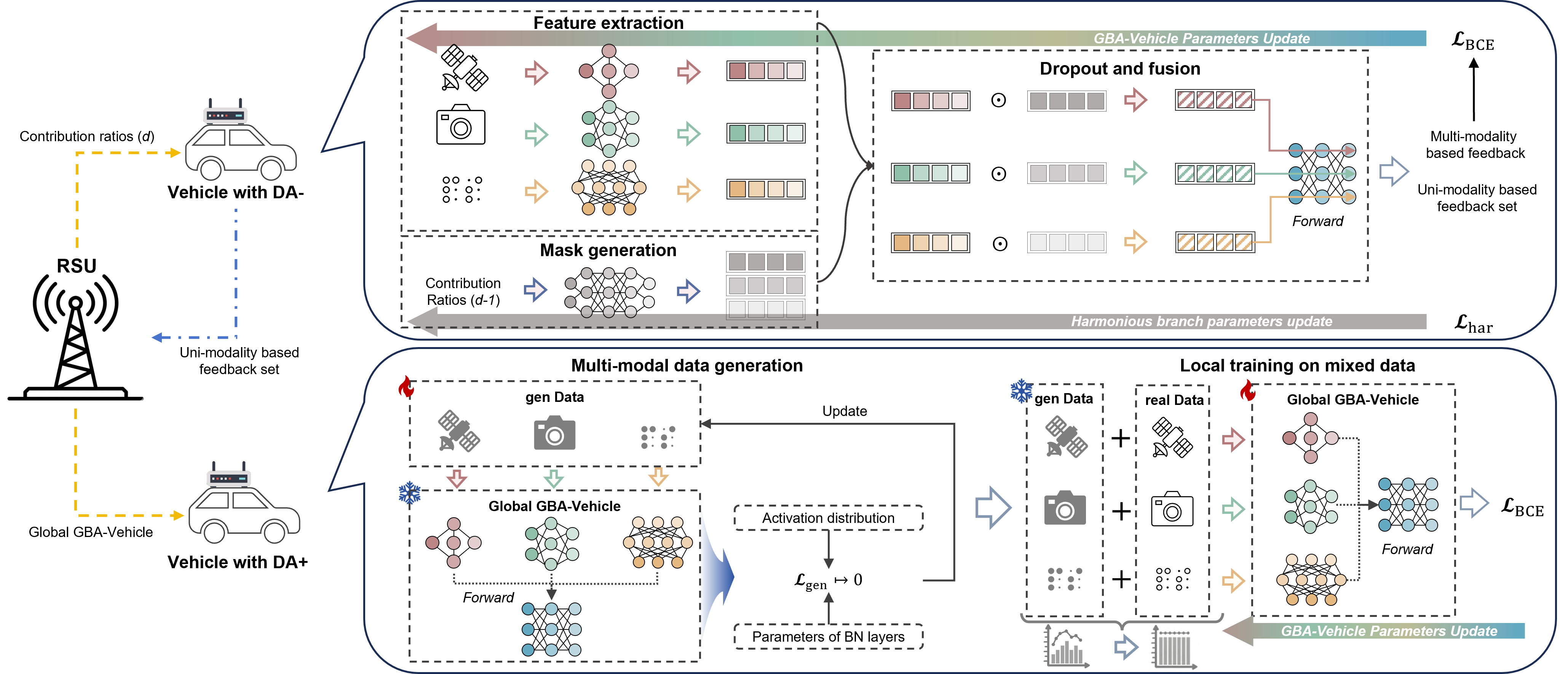}
    \caption{The application illustration of DA- and DA+ strategies.}
    \label{fig3}
    \vspace{-18pt}
\end{figure*}

\vspace{-8pt}
\subsection{DA- for Modality Imbalance}

DA- strategy mitigates the modality imbalance through dynamically modulating update speeds via modality dropping. Let $\mathbf{m}_k^{q,d} \in \mathbb{R}^{L_q}$ denote the dropping mask for modality $q \in \mathcal{Q}_k$ in the $d$-th mini-batch, which takes values of $\mathbf{0}_{L_q}$ or $\mathbf{1}_{L_q}$ to indicate the dropping or keeping of the corresponding feature. As illustrated in Fig. \ref{fig3}, for each vehicle, the mask is generated by the harmonious branch $\mathcal{S}^{\rm H}_k(\cdot)$, a tailored NN mapping the $(d-1)$-th mini-batch modality contribution ratios $\varphi_k^{{q},d-1}$ to the dropping masks. If the vehicle is equipped with all three sensors, the above process is expressed as $\mathbf{M}_k^{\text{H},d} = \mathcal{S}^{\rm H}_k(\boldsymbol{\varphi}_k^{d-1})$ with $\mathbf{M}_k^{d} = \mathbf{m}_k^{\text{R},d} \oplus \mathbf{m}_k^{\text{L},d} \oplus \mathbf{m}_k^{\text{G},d}$ and $\boldsymbol{\varphi}_k^{d-1} = \varphi_k^{{\rm L},d-1} \oplus \varphi_k^{{\rm R},d-1} \oplus \varphi_k^{{\rm G},d-1}$. Within the harmonious branch, contribution ratios are first projected into $L_\text{H}$ dimensions via an MLP and then fed into an attention computation module to integrate cross-modal information. The masked fusion feature is obtained via $\hat{\mathbf{f}}_{k}^{{\rm F},d} = \mathbf{M}_{k}^{d} \odot \mathbf{f}_{k}^{{\rm F},d}$ and utilized to predict the multi-modality based feedback results for BCE loss calculation. Meanwhile, the masked modality features $\hat{\mathbf{f}}_{k}^{q,d} = \mathbf{m}_{k}^{q,d} \odot \mathbf{f}_{k}^{q,d}$ are processed by the corresponding parameter blocks $(\mathbf{W}_{\rm F}^{q}, \mathbf{b}_{\rm F}^{q})$ of the fusion branch to predict uni-modality based feedback vectors, which are sent to the RSU for calculating the ratios of the current mini-batch.

The parameters of each vehicle's harmonious branch update concurrently with the GBA-Vehicle training per mini-batch. Upon getting the contribution ratios of the current mini-batch, the harmonious loss function is calculated as:
\begin{equation}
\label{eq15}
\mathcal{L}_{\mathrm{har},k}
=\sum_{q \in \mathcal{Q}_k}||1-\varphi_k^{q,d}||_2,
\end{equation}
To prevent the harmonious branch from discarding excessive features in pursuit of balance, a regularization function $\mathcal{L}_{\mathrm{har,R}} = -\sum_{q=1}^{|\mathcal{Q}|}R^q$ is introduced. Here, $R^q$ is the sum rates achieved by uni-modality based feedback vectors across all vehicles. Thus, the overall harmonious loss is defined as:
\begin{equation}
\mathcal{L}_{\mathrm{har}}
=\alpha\,\sum_{k=1}^K\mathcal{L}_{\mathrm{har},k}
+ \beta \frac{\mathcal{L}_{\mathrm{har,R}}}{K},
\label{eq17}
\end{equation}
where $\alpha$ and $\beta$ are the trade-off coefficients. With the help of the harmonious branch, the update bias caused by modality imbalance can be alleviated.


\vspace{-8pt}
\subsection{DA+ for Label Imbalance}

To mitigate label imbalance, we propose the DA+ strategy, which leverages privacy-preserving synthetic data generation to drive the label overlap proportion $\zeta$ closer to $1$. Specifically, the running mean and variance of BN layers within GBA-Vehicle, which reflect the input data distribution \cite{CYDG-CVPR-20a}, are used to develop a feature-space distribution matching generation method. Its objective is described as finding synthetic data leading to similar statistics of the BN layers in the global model. As illustrated in Fig. \ref{fig3}, upon receiving the global model, each vehicle extracts the running mean $\{\boldsymbol{\mu}_l\}_{l=1}^{L_{\rm BN}}$ and variance $\{\boldsymbol{\sigma}_l\}_{l=1}^{L_{\rm BN}}$ from the $L_{\rm BN}$ BN layers\footnote{The dimension of the running mean and variance is determined by the number of channels in the layer preceding the BN layer.}. Then, the initialized synthetic dataset of each vehicle $\hat{\mathbb{D}}_k = \{ \mathbf{x}_k^{(q)} \}_{q \in \mathcal{Q}_k}$ is sampled from a Gaussian distribution and fed into the global model with frozen parameters to extract the activations distribution $\{\hat{\boldsymbol{\mu}}_l, \hat{\boldsymbol{\sigma}}_l\}_{l=1}^{L_{\rm BN}}$ for loss function calculation:
\begin{equation}
\begin{aligned}
\mathcal{L}_\mathrm{\text{BN}} = \sum_{l=1}^{{L_{\rm BN}}} \big(|\boldsymbol{\mu}_l 
- \hat{\boldsymbol{\mu}}_l |_2 + |\boldsymbol{\sigma}_l - \hat{\boldsymbol{\sigma}}_l|_2\big)
\label{eq24}
\end{aligned}
\end{equation}
To precisely expand synthetic data with needed label combinations of vehicles, the BCE loss is employed to guide the generation. Finally, the overall generation loss is defined as:
\begin{equation}
\begin{aligned}
\mathcal{L}_\mathrm{\text{gen}} = \mathcal{L}_\mathrm{\text{BN}} + \mathcal{L}_{\mathrm{BCE}}(\mathbf{v}^{\hat{m}}_k, \hat{\mathbf{v}}^{\hat{m}}_k),
\label{eq24-1}
\end{aligned}
\end{equation}
where $\mathbf{v}^{\hat{m}}_k$ is the needed label combination of $k$-vehicle and $\hat{\mathbf{v}}^{\hat{m}}_k$ is the corresponding prediction. After convergence, the local model is trained on the mixed dataset $\hat{\mathbb{D}}_k \cup \mathbb{D}_k$, whose label overlap proportion is closer toward $1$ than the original local dataset, to mitigate the adverse effect of label imbalance.

Next, we analyze the gradient discrepancy between training NN on the synthetic data and the real global data to support the effectiveness of DA+. To facilitate the analysis, we partition the NN into an input block, an output block, and $(L_{\rm BN}-1)$ BN blocks. Let $\hat{\mathbf{G}}_{m,l}$ and $\mathbf{G}_{m,l}^\star$ denote the expected gradients at the $l$-th BN block produced by training on synthetic and real global data with label combination $m$, respectively. Assuming the activation distribution are diagonal Gaussian, the following proposition demonstrates that the gradient error characterizes the convergence bound of generation loss function.

\begin{proposition}
\label{Proposition3}
\emph{[Gradient error bound between synthetic and real data]}: Given the activation distribution $(\hat{\boldsymbol{\mu}}_l, \hat{\boldsymbol{\sigma}}_l)$ before the $l$-th BN block induced by synthetic data, the gradient error between synthetic and real data of this block satisfies:
\begin{equation}
\|\hat{\mathbf{G}}_{m,l} - \mathbf{G}^\star_{m,l}\|_2 \le C_{l,m} \left( \|\boldsymbol{\mu}_l - \hat{\boldsymbol{\mu}}_l\|_2^2 + \|\boldsymbol{\sigma}_l - \hat{\boldsymbol{\sigma}}_l\|_2^2 \right)^{1/2},
\label{eq47}
\end{equation}
where $C_{l,m}$ is a constant. If Eq. \eqref{eq24-1} converges to zero during synthetic data creation, the gradients induced by synthetic data become identical to those induced by real data during training.
\end{proposition}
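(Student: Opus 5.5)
The plan is to write the expected gradient at the $l$-th BN block as an expectation over the block's input activation distribution, and then bound the difference of two such expectations by a Lipschitz argument in the Gaussian parameters. Concretely, let $g_{m,l}(\mathbf{a})$ denote the gradient of the loss (with target label combination $m$) with respect to the parameters of block $l$, evaluated at a block input $\mathbf{a}$, with all other parameters frozen at the global model. Under the diagonal-Gaussian assumption, $\hat{\mathbf{G}}_{m,l}=\mathbb{E}_{\mathbf{a}\sim\mathcal{N}(\hat{\boldsymbol{\mu}}_l,\mathrm{diag}(\hat{\boldsymbol{\sigma}}_l^2))}[g_{m,l}(\mathbf{a})]$, and $\mathbf{G}^\star_{m,l}$ is the same expectation under $\mathcal{N}(\boldsymbol{\mu}_l,\mathrm{diag}(\boldsymbol{\sigma}_l^2))$. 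The real-data BN statistics are exactly the running statistics stored in the global model.

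The key step is a reparameterization: write $\mathbf{a}=\boldsymbol{\mu}+\boldsymbol{\sigma}\odot\boldsymbol{\epsilon}$ with $\boldsymbol{\epsilon}\sim\mathcal{N}(\mathbf{0},\mathbf{I})$. Both gradients then become expectations over the same $\boldsymbol{\epsilon}$, and
\[
\|\hat{\mathbf{G}}_{m,l}-\mathbf{G}^\star_{m,l}\|_2\le \mathbb{E}_{\boldsymbol{\epsilon}}\big\|g_{m,l}(\hat{\boldsymbol{\mu}}_l+\hat{\boldsymbol{\sigma}}_l\odot\boldsymbol{\epsilon})-g_{m,l}(\boldsymbol{\mu}_l+\boldsymbol{\sigma}_l\odot\boldsymbol{\epsilon})\big\|_2 .
\]
I will assume that $g_{m,l}$ is $L_{l,m}$-Lipschitz in its input. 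This holds for compositions of linear layers, BN with fixed statistics, and smooth activations together with the BCE loss on bounded logits. Under that assumption the integrand is at most $L_{l,m}\big(\|\boldsymbol{\mu}_l-\hat{\boldsymbol{\mu}}_l\|_2+\|(\boldsymbol{\sigma}_l-\hat{\boldsymbol{\sigma}}_l)\odot\boldsymbol{\epsilon}\|_2\big)$.

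Next, take expectations and use Jensen: $\mathbb{E}\|(\boldsymbol{\sigma}-\hat{\boldsymbol{\sigma}})\odot\boldsymbol{\epsilon}\|_2\le\|\boldsymbol{\sigma}-\hat{\boldsymbol{\sigma}}\|_2$. Then $x+y\le\sqrt{2}\sqrt{x^2+y^2}$ gives Eq.~\eqref{eq47} with $C_{l,m}=\sqrt{2}L_{l,m}$. An alternative, if one prefers not to use a coupling, is to bound the difference by $\sup\|g\|$ times the total-variation distance and then apply Pinsker with the Gaussian KL. However, that route yields a square root of KL, which is only locally comparable to the parameter distance. The reparameterization route is therefore cleaner and is the one I will present.

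For the final claim, note that $\mathcal{L}_{\rm gen}\ge\mathcal{L}_{\rm BN}\ge 0$, since the BCE term is nonnegative. So if $\mathcal{L}_{\rm gen}\to 0$, each summand of Eq.~\eqref{eq24} vanishes, giving $\hat{\boldsymbol{\mu}}_l\to\boldsymbol{\mu}_l$ and $\hat{\boldsymbol{\sigma}}_l\to\boldsymbol{\sigma}_l$ for every $l$. The right-hand side of Eq.~\eqref{eq47} then vanishes. Meanwhile the BCE term going to zero ensures that the synthetic samples carry label combination $m$, which matches the conditioning on $m$ in $\mathbf{G}^\star_{m,l}$. Hence the gradients coincide.

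The main obstacle is justifying the Lipschitz constant $L_{l,m}$ and the conditioning on $m$. The BN statistics are global and not label-specific, so I must argue, or assume, that the per-label activation distribution is what is being matched once the BCE term enforces label $m$. I would handle this by stating it explicitly as part of the diagonal-Gaussian modeling assumption, and by bounding $L_{l,m}$ through products of the spectral norms of the downstream layers.
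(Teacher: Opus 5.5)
Your proof is correct, but it argues through an explicit coupling, whereas the paper goes through the dual (Kantorovich--Rubinstein) formulation of the transport distance. The paper assumes the activation-to-gradient map $\psi_{m,l}$ is $C_{l,m}$-Lipschitz and projects it onto an arbitrary unit vector $\mathbf{e}$. It then applies Kantorovich--Rubinstein duality to get $|\mathbf{e}^\top(\hat{\mathbf{G}}_{m,l}-\mathbf{G}^\star_{m,l})|\le C_{l,m}W_1$ and takes the supremum over $\mathbf{e}$. Finally it uses $W_1\le W_2$ and inserts the closed-form Bures--Wasserstein distance between diagonal Gaussians.

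You instead couple the two distributions through the shared noise $\boldsymbol{\epsilon}$ and move the norm inside the expectation. This needs neither the duality theorem, nor the projection step, nor the Gaussian $W_2$ formula. Your coupling is in fact the $W_2$-optimal one for diagonal Gaussians. The $\sqrt{2}$ you lose comes only from applying the triangle inequality before Jensen. Applying Jensen to the whole difference gives
\[
\mathbb{E}_{\boldsymbol{\epsilon}}\big\|(\boldsymbol{\mu}_l-\hat{\boldsymbol{\mu}}_l)+(\boldsymbol{\sigma}_l-\hat{\boldsymbol{\sigma}}_l)\odot\boldsymbol{\epsilon}\big\|_2\le\big(\|\boldsymbol{\mu}_l-\hat{\boldsymbol{\mu}}_l\|_2^2+\|\boldsymbol{\sigma}_l-\hat{\boldsymbol{\sigma}}_l\|_2^2\big)^{1/2},
\]
because the cross term has zero mean, and this recovers the paper's constant $C_{l,m}=L_{l,m}$.

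What the paper's route buys is the intermediate, distribution-free bound $\|\hat{\mathbf{G}}_{m,l}-\mathbf{G}^\star_{m,l}\|_2\le C_{l,m}W_1(\hat{\mathcal{D}}_{l,m},\mathcal{D}^\star_{l,m})$. There, Gaussianity enters only at the last step, whereas yours uses the Gaussian structure from the start.

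You are more careful than the paper on two points. First, the paper never argues why $\mathcal{L}_{\rm gen}\to 0$ forces $\hat{\boldsymbol{\mu}}_l\to\boldsymbol{\mu}_l$ and $\hat{\boldsymbol{\sigma}}_l\to\boldsymbol{\sigma}_l$; your nonnegativity argument supplies this. Second, the paper silently identifies the label-$m$ activation law with the global BN statistics, which you correctly flag as a modeling assumption.

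One caution: your claim that Lipschitzness of $g_{m,l}$ follows from the architecture is too quick. Parameter gradients contain products of a backpropagated error and a layer input, e.g. $\boldsymbol{\delta}(\mathbf{a})\mathbf{a}^\top$, which are not globally Lipschitz on the unbounded Gaussian support. Also, spectral-norm products bound the forward map, not the gradient map. State it as an assumption, as the paper does.
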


\emph{Proof}: See Appendix \ref{P3}.

\vspace{-8pt}
\subsection{Coordinated Application Strategy}

To leverage the DA- and DA+ strategies against modality and label imbalances during distributed training of GBA-Vehicle, we tailor a coordinated application strategy for them. In particular, GBA-Vehicle is first trained with the DA- strategy to address modality imbalance, followed by fine-tuning with the DA+ strategy to correct label imbalance.

\textbf{[Step A] GBA-Vehicle training:} The DA- strategy is applied in every mini-batch $d$ of each local training epoch $r$ to modulate GBA-Vehicle updates. During the local forward propagation, each vehicle first inputs the previous contribution ratios to its harmonious branch for dropping masks generation\footnote{The contribution ratios are randomly initialized in the first local epoch. In subsequent epochs, the initial mini-batch inherits ratios of the final mini-batch at the previous local training epoch.}, which are used to drop modality features with the highest contributions, thereby adjusting the gradient allocation dynamically. The masked features are employed to predict the uni-modality based feedback set, which is transmitted to the RSU to compute contribution ratios. Then, the the gradients $g_k^{d,r}$ for each vehicle's harmonious branch are calculated and unicasted back to the respective vehicles for the updates of parameters. In local backward propagation, the harmonious branch is updated via $\mathbf{\Theta}_{s,k}^{{\rm H},d+1,r} = \mathbf{\Theta}_{s,k}^{{\rm H},d,r} -\eta_s\,\nabla_{\mathbf{\Theta}_{s,k}^{{\rm H},d,r}} \mathcal{L}^{d,r}_{\mathrm{har},k}$. Upon completing all mini-batches in the local epoch, accumulated gradients computed through BCE loss are utilized to update parameters of GBA-Vehicle, which are subsequently sent to the RSU for global model aggregation. In contrast, since each harmonious branch is designed for specific modality completeness rate of the corresponding vehicle, it is excluded from global aggregation to preserve the local knowledge. The total process is shown in Algorithm \ref{alg2}.

\textbf{[Step B] GBA-Vehicle fine-tuning:} After mitigating the modality imbalance, the trained GBA-Vehicle is further fine-tuned via the DA+ strategy to address the label imbalance. Each vehicle first evaluates its own imbalance level by computing the label overlap proportion and then identifies the underrepresented label combinations. For each such label combination, the required number of synthetic samples is determined by $\Delta N_k^m = \max(\mathcal{N}_k) - N_k^m$. Considering that the long-tail distribution of $\mathcal{N}_k$ may lead to excessively large $\Delta N_k^m$ values and substantial generation overhead, the RSU helps each vehicle select only a few candidate label combinations for data generation. In particular, the candidate label combinations are ranked according to the corresponding sum rates in the feedback graph, and the top three are retained. Accordingly, the corresponding synthetic data $\hat{\mathbb{D}}_k$ is generated via Eq~\eqref{eq24-1}. Entering the local training stage, the mixed dataset is formed via ${\tilde{\mathbb{D}}}_k=\hat{\mathbb{D}}_k \cup \mathbb{D}_k$ and used to fine-tune GBA-Vehicle. Since the synthetic data primarily compensates for underrepresented label combinations without altering the multi-modal feature space, the previously learned feature extractors remain highly applicable. Consequently, only decision layers mapping feature space to label space are required to update, including the final layer of each extraction branch and the fusion branch. This strategy allows GBA-Vehicle to accommodate the enriched label space while preserving the inter-branch balance established by the DA+ strategy. Finally, the fine-tuned parameters are sent to the RSU for global model aggregation.

\begin{algorithm}[t!]
\caption{DA- Application in Local Training Epoch}
\label{alg2}
\underline{\textbf{Input:}} Local dataset of vehicles $\{\mathbb{D}_k\}_{k=1}^K$, Trained $\mathbf{\Theta}_g$, \\
\phantom{\underline{\textbf{Input:}}} Trained $\{\mathbf{\Theta}^{\text{F},r}_{s,k}, \{\mathbf{\Theta}_s^{q\!,r}\}_{q \in \mathcal{Q}_k}\}_{k=1}^K$ of the $r$-th epoch, \\
\phantom{\underline{\textbf{Input:}}} Trained $\{\mathbf{\Theta}^{\text{H},r}_{k}\}_{k=1}^K$ of the $r$-th epoch.

\underline{\textbf{Output:}} Updated $\{\mathbf{\Theta}^{\text{H},r+1}_k\}_{k=1}^K$, \\
\phantom{\underline{\textbf{Output:}}} Updated $\{\mathbf{\Theta}^{\text{F},r+1}_{s,k}, \{\mathbf{\Theta}_s^{q\!,r+1}\}_{q \in \mathcal{Q}_k}\}_{k=1}^K$.
\begin{algorithmic}[1]
\For{$d = 1,..., N_{\text{mini}}$} \textbf{do}
    \Statex \hspace*{\algorithmicindent} // 1) \textit{Local forward \& multi-modal modulation}
    \For {\textbf{all} vehicle $k$ \textbf{in parallel}}
    \State $\mathbf{M}_k^{\text{H},d,r} = \mathcal{S}^{\text{H},d,r}_k(\boldsymbol{\varphi}_k^{d-1}; \mathbf{\Theta}_{k}^{\text{H},d,r})$;
    \State $\{\mathbf{f}_k^{q,d,r} = \mathcal{S}^{q,d,r}_k(\mathbf{x}^{q}_k);\mathbf{\Theta}_{s,k}^{q, d,r}\}_{q \in \mathcal{Q}_k}$;
    \State $\{\hat{\mathbf{v}}_k^{q,d,r} \gets \mathbf{W}_{\rm F}^{q,d,r}\mathbf{f}_k^{q,d,r}+\mathbf{b}_{\rm F}^{q,d,r}\}_{q \in \mathcal{Q}_k}$;
    \State $\hat{\mathbf{v}}_k^{d,r} \gets \mathcal{S}^{\text{F},d,r}_k(\mathbf{M}_{k}^{d}\odot \mathbf{f}_{k}^{{\text{F}},d,r};\mathbf{\Theta}_{s,k}^{\text{F}, d,r})$;
    \State $\nabla_{\mathbf{\Theta}_{s,k}^{d,r}} \mathcal{L}_{{\rm BCE},k}^{d,r} \gets \frac{\partial \mathcal{L}_{{\rm BCE},k}^{d,r}}{\partial \hat{\mathbf{v}}_k^{d,r}} \frac{\partial \hat{\mathbf{v}}_k^{d,r}}{\partial \mathbf{\Theta}_{s,k}^{d,r}}$;
    \EndFor
\Statex  \hspace*{\algorithmicindent} // 2) \textit{RSU-side contribution computation}
\State $\{\hat{\mathbf{V}}^{q,d,r} \gets \{\hat{\mathbf{v}}_1^{q,d,r}, ..., \hat{\mathbf{v}}_{|\mathcal{K}_{q}|}^{q,d,r}\}\}_{q \in \mathcal{Q}}$;
\State Extract $\{\hat{\mathbf{A}}^{q,d,r}\}_{q \in \mathcal{Q}}$ via Eq.~\eqref{eq6};
\State $\{\hat{\mathbf{T}}^{q,d,r} \gets \mathcal{G}(\hat{\mathbf{A}}^{q,d,r}, \hat{\mathbf{V}}^{q,d,r};\mathbf{\Theta}_g)\}_{q \in \mathcal{Q}}$;
\State Calculate $\{R^{q,d,r}_k\}_{q \in \mathcal{Q}}$ via Eq~\eqref{eq2} for each vehicle;
\State $\varphi^{q,d,r}_k \gets \frac{1}{|\mathcal{Q}_k|-1} \sum_{i=1, j \neq i}^{|\mathcal{Q}_k|}\frac{R^{q,d,r}_k}{R^{q_j,d,r}_k}$;
\State $g_k^{d,r} \gets \frac{\partial \mathcal{L}^{d,r}_{\mathrm{har},k}}{\partial \sum_{q \in \mathcal{Q}_k} \varphi^{q,d,r}_k} \frac{\partial \sum_{q \in \mathcal{Q}_k} \varphi^{q,d,r}_k}{\partial \hat{\mathbf{v}}_k^{q,d,r}}$;
\State \textbf{unicast} $g_k^{d,r}$ and $\{\varphi_k^{q,d,r}\}_{q \in \mathcal{Q}}$ to vehicle $k$
\Statex \hspace*{\algorithmicindent} // 3) \textit{Local backward \& parameters update}
    \For {\textbf{all} vehicle $k$ \textbf{in parallel}}
        \State $\nabla_{\mathbf{\Theta}_{k}^{\text{H},d}} \mathcal{L}^{d,r}_{\mathrm{har},k} \gets g_k \frac{\partial\{\hat{\mathbf{v}}_k^{q,d,r}\}_{q \in \mathcal{Q}_k}}{\partial\mathbf{M}^{\text{H},d,r}_k}\frac{\partial\mathbf{M}^{\text{H},d,r}_k}{\partial \mathbf{\Theta}_{k}^{\text{H},d,r}}$;
        \State $\mathbf{\Theta}_{s,k}^{{\rm H},d+1,r} \gets \mathbf{\Theta}_{s,k}^{{\rm H},d,r} -\eta_s\,\nabla_{\mathbf{\Theta}_{s,k}^{{\rm H},d,r}} \mathcal{L}^{d,r}_{\mathrm{har},k}$;
    \EndFor
\EndFor
\For {\textbf{all} vehicle $k$ \textbf{in parallel}}
\State $\mathbf{\Theta}^{\text{F},r+1}_{s,k} \gets \mathbf{\Theta}^{\text{F},r}_{s,k}-  \eta_s\, \sum_{d=1}^{N_{\text{mini}}} \nabla_{\mathbf{\Theta}_{s,k}^{d,r}} \mathcal{L}_{{\rm BCE},k}^{d,r}$;
\State $\{\mathbf{\Theta}^{{q},r+1}_{s,k} \!\gets\! \mathbf{\Theta}^{{q},r}_{s,k} \!-\! \eta_s\, \sum_{d=1}^{N_{\text{mini}}} \nabla_{\mathbf{\Theta}_{s,k}^{d,r}} \mathcal{L}_{{\rm BCE},k}^{d,r} \}_{q \in \mathcal{Q}_k}$;
\EndFor
\end{algorithmic}
\end{algorithm}

\textit{Remark 2:} The application sequence of DA- and DA+ is immutable. If exchanged, the fine-tuning gradients of the well-converged GBA-Vehicle are too small to support effective modulation. Moreover, the two strategies jointly applied in a single training stage are sub-optimal, since the synthetic data usually provides weaker feature representations than real data. Incorporating it into the DA- training stage would make the dropping behave inconsistently across samples.

\vspace{-8pt}
\section{Simulations}
This section presents the dataset and implementation details of the proposed scheme, followed by a description of the benchmarks and the analysis of the experimental results.

\vspace{-8pt}
\subsection{Dataset}

To showcase the environment perception abilities of distributed multimodal sensing in complex V2X networks, we utilize the real-world dataset FLASH \cite{SGRC-INFOCOM-22}, which was collected by a vehicle equipped with multiple sensors and IEEE 802.11ad Talon Routers operating in the $60$ GHz band over $10$ trials. It covers both LoS and non-line-of-sight (NLoS) scenarios and includes dynamic obstacles such as pedestrians and vehicles. The size of the default codebook is $34$. 

1) \textbf{Label-imbalanced Dataset:}
Due to dynamic obstacles, the label imbalance in the FLASH dataset is so severe that it serves as the basic multimodal data imbalanced scenario. To simulate an RSU serving spatially dispersed vehicles, we introduce a temporal offset for each trial, preserving only the overlapping segments. Finally, the label overlap proportion in the dataset is $\zeta=2.9\%$, indicating severe label imbalance.

2) \textbf{Modality-imbalanced Dataset:}
To further characterize modality imbalance in V2X systems, we consider two representative scenarios including heterogeneous sensor configurations and intermittent modality unavailability. Since GPS is treated as an always-on modality, only RGB and LiDAR data are dropped to achieve completeness ratios of $\kappa^{\rm L}=\kappa^{\rm R}=0.8$.

\begin{figure}[!t]
    \centering
    \includegraphics[width=\columnwidth]{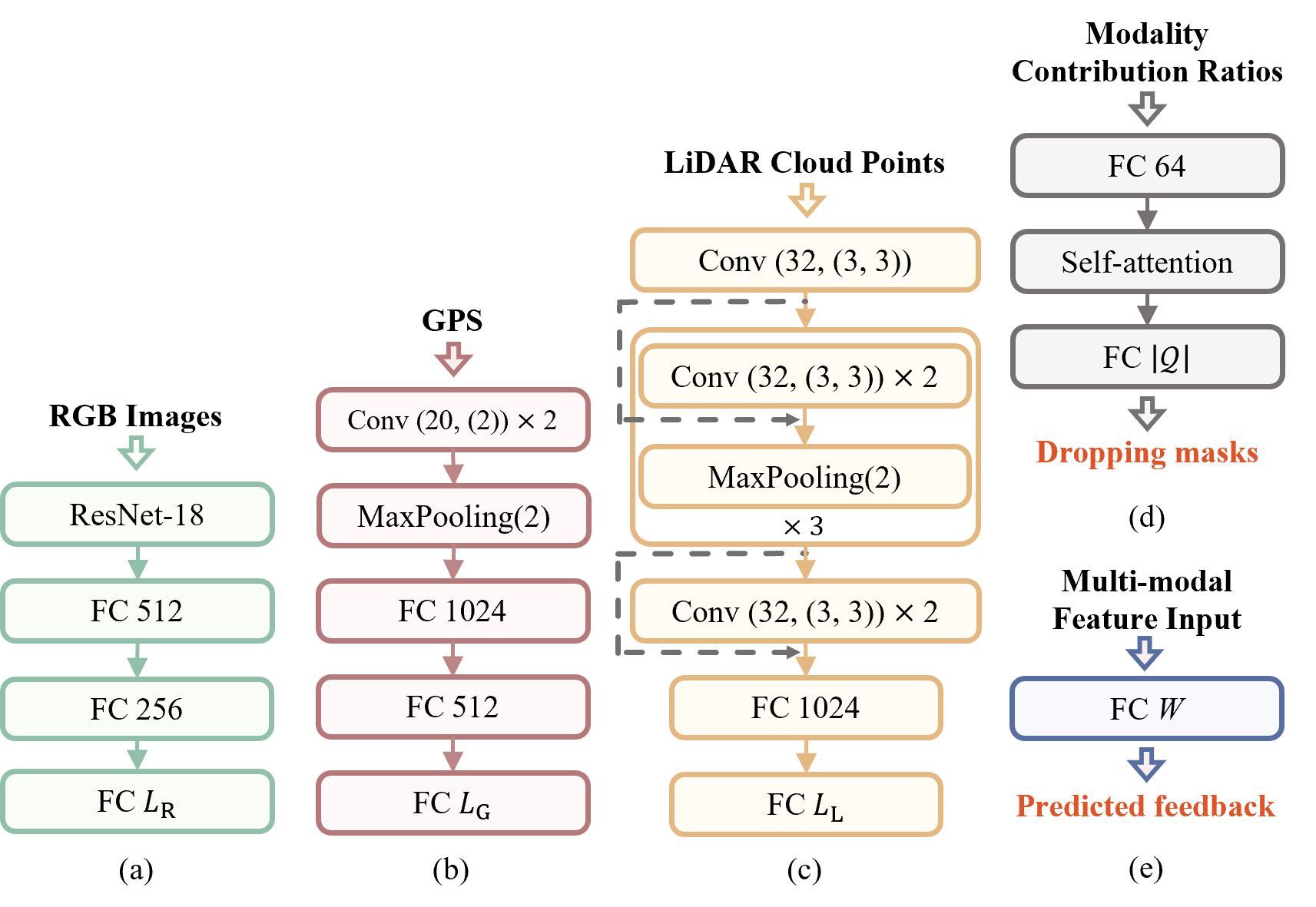}
    \caption{Illustrations of NN architectures for: (a) RGB image branch; (b) GPS branch; (c) LiDAR branch; (d) harmonious branch; (e) multi-modal feature fusion branch.}
    \label{fig4}
\end{figure}

\begin{table}[!t]
	\setlength{\abovecaptionskip}{0.1cm} 
	\renewcommand\arraystretch{0.1} 
    \setlength{\tabcolsep}{4pt}
	\centering
	\caption{Comparisons of delay and feedback overhead among GBA and benchmarks.}
	\resizebox{\linewidth}{!}{
		\label{tab6}
		\begin{tabular}{@{\hspace{1.2em}}c@{\hspace{1.2em}}|@{\hspace{1.2em}}c@{\hspace{1.2em}}|@{\hspace{1.2em}}c@{\hspace{1.2em}}}
			\toprule[0.35mm]
			\textbf{Methods} & \textbf{Delay (ms)} & \textbf{Feedback (bits)} \\
			\midrule[0.15mm]
			\makecell[c]{WMMSE w/ channel estimation} & 20.31 & \makecell[c]{442} \\
			\midrule[0.15mm]
			\makecell[c]{ZF w/ beam sweeping} & 20.31 & \makecell[c]{\underline{\textbf{6}}} \\
			\midrule[0.15mm]
			\makecell[c]{GBA} & \underline{\textbf{0.91}} & \makecell[c]{34} \\
			\bottomrule[0.35mm]
		\end{tabular}
	}
\end{table}

\begin{figure}[!t]
\centering
\includegraphics[width=0.85\columnwidth]{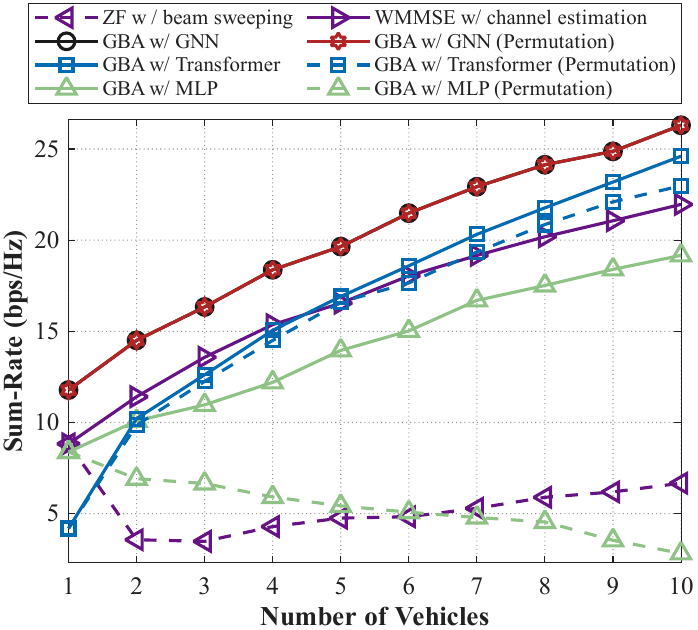}
\caption{Effective sum rate comparison among various GBA-RSU variants for different numbers of vehicles.}
\label{fig6}
\vspace{2pt}
\end{figure}

\vspace{-8pt}
\subsection{Implementation Details}

Both GBA-RSU and GBA-Vehicle utilize AdamW optimizer with initial learning rates of $\eta_g=\eta_{g-r}=1\times 10^{-3}$ and $\eta_s=1\times 10^{-4}$. We configure the GBA-RSU hidden dimension as $d_{\text{g}} = 384$ and the GBA-Vehicle feature dimensions as $L_\text{G} = 256$, $L_\text{R} = 256$, $L_\text{L} = 512$ and $L_\text{H} = 64$. The corresponding batch sizes are $2048$ and $256$. Fig. \ref{fig4} depicts the architectures for the extraction and fusion branches of GBA-Vehicle, and the harmonious branch. For GBA-RSU training, the dropping and error probabilities are set to $p_{\text{drop}}=0.25$ and $p_{\text{error}}=0.2$. For GBA-Vehicle, the trade-off coefficients are set $\alpha=\beta=1$ and the synthetic data is generated by AdamW optimizer with a batch size of $32$.

The number of epochs for GBA-RSU training and retraining is set to $30000$. For DA- strategy-based training and DA+ strategy-based fine-tuning of GBA-Vehicle, the required global epochs vary according to the sensor configuration. Specifically, we allocate $200$ global epochs for the label-imbalanced dataset, while assigning $120$ and $250$ epochs for the sensor type and performance imbalanced scenarios, respectively. Local training epochs for both stages are set to $5$ per vehicle. Each vehicle's local data is split into an $80\%$ training set and a $20\%$ test set.

\vspace{-8pt}
\subsection{Benchmarks}

We compare the GBA scheme against two representative alignment benchmarks based on high-resolution Type-II codebook-based CSI feedback \cite{LLWH-ComSM-24}. The DA strategy is evaluated under three distributed training paradigms.

\begin{itemize}
 
    \item \textbf{WMMSE w/ channel estimation \cite{PeJa-TSP-22}:}
    The WMMSE method with CSI estimated by RSS measurements via greedy algorithm is used as a benchmark scheme. It is adopted as the upper bound of the sum rates of traditional methods as it provides a very decent local optimum.

    \item \textbf{ZF w/ beam weeping \cite{JoUN-TSP-05}:}
    The ZF method with beam sweeping is used as a benchmark scheme. Each vehicle selects the beam index with the highest RSS and sends it to the RSU, which then allocates power equally.

    \item \textbf{Upper bound of FL (UB):}
    All vehicles transmit their raw local data to the RSU for GBA-Vehicle training. This approach serves as the upper bound for FL, but at the cost of prohibitive data transmission overhead.

    \item \textbf{FedAvg \cite{MMRH-arXiv-16}:}
    Each vehicle trains local GBA-Vehicle on private dataset and uploads the parameters to the RSU for global aggregation. It serves as a benchmark for FL.

    \item \textbf{Federated Learning for Automated Selection of High-band mmWave Sectors (FLASH) \cite{SGRC-INFOCOM-22}:}
    The FLASH is an FL-based alignment benchmark scheme, where RSU selects updated branch for each vehicle.

\end{itemize}

\begin{table*}[!t]
    \setlength{\abovecaptionskip}{0.04cm}
    \renewcommand\arraystretch{1.1}
    \setlength{\tabcolsep}{3.5pt}
    \setlength{\aboverulesep}{2.8pt}
    \setlength{\belowrulesep}{1.4pt}
    \caption{Comparisons among DA and benchmarks under different imbalanced scenarios.}
    \centering
    \fontsize{4.5pt}{5.2pt}\selectfont
    \label{tab2}
    \resizebox{0.91\linewidth}{!}{
    \begin{tabular}{c|C{1.0cm}|C{1.0cm}|C{1.0cm}|C{1.0cm}|C{1.0cm}|C{1.0cm}}
        \toprule[0.2mm]
        \multicolumn{1}{c|}{} & \multicolumn{6}{c}{\textbf{Label Imbalanced Offline Training Dataset}} \\[-2pt]
        \cmidrule[0.1mm]{2-7}
        \multicolumn{1}{c|}{\textbf{Schemes}} &
          \multicolumn{2}{c|}{\textbf{~~~~~~~~Modality balance~~~~~~~~}} &
          \multicolumn{2}{c|}{\textbf{~~~~~Sensor type imbalance~~~~~}} &
          \multicolumn{2}{c}{\textbf{Sensor performance imbalance}} \\[-2.2pt]
        \cmidrule[0.1mm]{2-3}\cmidrule[0.1mm]{4-5}\cmidrule[0.1mm]{6-7}
        & \textbf{sum rates $\uparrow$} & \textbf{Accuracy $\uparrow$}
        & \textbf{sum rates $\uparrow$} & \textbf{Accuracy $\uparrow$}
        & \textbf{sum rates $\uparrow$} & \textbf{Accuracy $\uparrow$} \\[-2.2pt]
        \midrule[0.1mm]
        \makecell[c]{GBA w/ UB} &
        29.65 & 62.06 &
        24.47 & 43.10 &
        24.39 & 35.11 \\[-3pt]
        \midrule[0.1mm]
        \textbf{GBA w/ DA} &
        \underline{\textbf{26.30}} & \underline{\textbf{48.97}} &
        \underline{\textbf{22.37}} & \underline{\textbf{33.47}} &
        \underline{\textbf{23.48}} & \underline{\textbf{33.15}} \\[-3pt]
        \midrule[0.1mm]
        \makecell[c]{GBA w/ FedAvg} &
        24.20 & 38.94 &
        19.05 & 22.96 &
        18.90 & 20.03 \\[-3pt]
        \midrule[0.1mm]
        \makecell[c]{GBA w/ FLASH} &
        23.07 & 31.30 &
        17.74 & 15.85 &
        18.23 & 16.90 \\[-3pt]
        \bottomrule[0.2mm]
    \end{tabular}
}
\vspace{-7pt}
\end{table*}

\begin{figure*}[!t]
    \centering
    \includegraphics[width=0.91\textwidth]{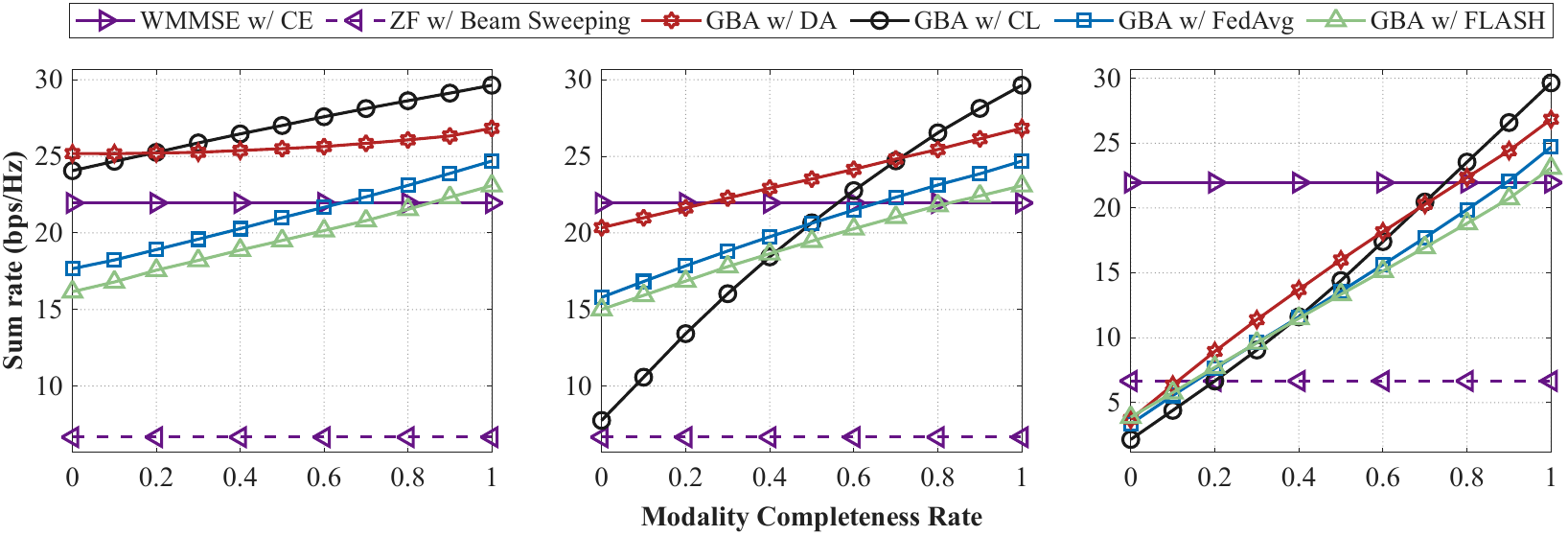}
    \caption{Comparisons among GBA and benchmarks under varying online sensor configurations: modality completeness rates decreasing for RGB (Left), LiDAR (Middle), and both of them (Right).}
    \label{fig5}
    \vspace{-16pt}
\end{figure*}

\vspace{-8pt}
\subsection{Performance Metric}

We use prediction accuracy to evaluate the performance of feedback generation, while the effective sum rates is employed to assess the communication performance of beam alignment. Let $T_{\rm coher} = \frac{T_{\rm contact}}{W}$ denote the beam coherence time of the system, namely, the duration for which a beam remains valid \cite{Sale-TON-24}. Here, $T_{\rm contact}$ represents the total time during which the vehicle remains within the coverage of the RSU, defined as:
\begin{equation}
\label{eq44}
    T_{\rm contact} = \frac{2h \tan(\frac{\phi_{\rm RSU}}{2})}{v},
\end{equation}
where $\rm RSU$ and $h$ denote the coverage angle and height of the RSU, respectively, and $v$ is the velocity of the vehicle \cite{MMGE-INFOCOMWkshps-19}. Then, the beam alignment delay must satisfy $T_{\rm delay} < T_{\rm coher}$, and the effective sum rates is defined as:
\begin{equation}
\label{eq32}
    \tilde{R} = \frac{T_{\rm coher}-T_{\rm delay}}{T_{\rm coher}} R
\end{equation}
Here, $T_{\rm delay}$ and $R$ denote the length of alignment period and the theoretical sum rates, respectively. 

\vspace{-8pt}
\subsection{Comparative Study}

We begin by comparing the delay and feedback overhead. According to \cite{Sale-TON-24}, the beam coherence time and conventional beam initialization time in the FLASH dataset are $62.4$ ms and $20.31$ ms, respectively. In the proposed GBA scheme, beam initialization is achieved through feedback prediction, which requires only about $0.91$ ms. This reduces the initialization delay by more than $95\%$ and leaves more time for reliable transmission within the beam coherence interval. For feedback overhead, to enable high-quality multi-user alignment, the CE method requires the full RSS vector, whose average quantized length is $442$ bits in the FLASH dataset. In contrast, the beam sweeping scheme only needs to transmit $6$ bits for the selected beam index, but it is less effective in mitigating severe inter-vehicle interference. Although the proposed GBA scheme requires $34$ bits to transmit the predicted feedback, which is higher than that of beam sweeping, it provides richer feedback information for interference mitigation, thereby improving the system sum rates. Moreover, as the codebook size increases, the overhead of full feedback grows substantially, making the overhead advantage of GBA more significant. Since the FLASH dataset is based on the $5$ GHz back channel of the Talon routers \cite{SGRC-INFOCOM-22}, The feedback latencies for CE, beam sweeping, and GBA are calculated as $247~ns$, $4~ns$ and $19~ns$. When the entire alignment procedure is taken into account, the corresponding alignment periods occupy $32\%$, $32\%$, and $2\%$ of the coherence time, respectively. 

Fig.~\ref{fig6} presents the communication and scalability performance. Due to reduced alignment overhead, the GBA scheme achieves a higher effective sum rates than WMMSE and ZF across different vehicle counts. To further examine the effect of the RSU-side policy architecture, we replace the GNN-based GBA-RSU with Transformer-based and MLP-based variants, while training GBA-Vehicle using the DA strategy. Although the Transformer can handle variable-sized inputs and therefore exhibits limited scalability, its performance deteriorates under vehicle reordering because it doesn't satisfy permutation equivariance. Likewise, even when separate MLPs are trained for different feedback graph sizes, the MLP-based GBA remains highly sensitive to input permutations. This issue becomes more severe as the number of vehicles increases, since larger feedback graphs amplify the distribution skew caused by different input orderings. In contrast, the GNN-based GBA is insensitive to vehicle reordering while maintaining the best scalability. These results demonstrate that the proposed GBA scheme provides a scalable and low-overhead solution for beam alignment in V2X systems.

\begin{table*}[!t]
		\setlength{\abovecaptionskip}{0.04cm} 
		\renewcommand\arraystretch{1}
        \setlength{\tabcolsep}{4pt}
		\setlength{\aboverulesep}{2.8pt}
		\setlength{\belowrulesep}{1.4pt}
		\tiny
		\caption{Ablation study of GBA-RSU structures. Performance is reported as the sum rates (bps/Hz).}
		\centering
		\label{tab3}
		\resizebox{\linewidth}{!}{
			\begin{tabular}{c|c|c|c|c|c|c|c|c|c|c}
				\toprule[0.2mm]
				\multirow{2}{*}{\textbf{Schemes}}  & \multicolumn{10}{c}{\textbf{Number of vehicles}} \\[-2.6pt]  
				\cmidrule[0.1mm]{2-11}
				& 1 & 2 & 3 & 4 & 5 & 6 & 7 & 8 & 9 & 10\\ [-2.2pt]  
				\midrule[0.1mm]   
				{w/ non-negativity and re-normalization} & \textbf{\underline{12.79}} & \textbf{\underline{14.76}} & \textbf{\underline{15.38}} & 
                \textbf{\underline{22.26}} & 
                \textbf{\underline{23.15}} & 
                \textbf{\underline{24.27}} & 
                \textbf{\underline{26.41}} & 
                \textbf{\underline{27.69}} & \textbf{\underline{28.17}} & \textbf{\underline{29.12}} \\[-2.8pt]
				\midrule[0.1mm]
				\makecell[c]{w/ re-normalization} & 12.79 & 14.31 & 14.57 & 21.94 & 23.13 & 23.97 & 25.44 & 27.14 & 27.79 & 28.87 \\[-3pt]
				\midrule[0.1mm]
				\makecell[c]{w/ re-non-negativity} & 12.73 & 13.71 & 14.36 & 17.29 & 20.91 & 20.68 & 22.94 & 25.15 & 25.96 & 28.27 \\[-3pt]
				\midrule[0.1mm]
				\makecell[c]{w/o non-negativity and re-normalization} & 12.73 & 8.78 & 10.31 & 13.31 & 13.60 & 12.54 & 13.36 & 21.44 & 14.90 & 13.22 \\[-3pt]
				\midrule[0.1mm]
				\makecell[c]{WMMSE w/ channel estimation} & 13.09 & 16.92 & 20.13 & 22.78 & 24.53 & 26.75 & 28.41 & 29.93 & 31.23 & 32.56 \\[-3pt]
				\bottomrule[0.2mm]
			\end{tabular}
		}
	\end{table*}

\begin{table*}[!t]
		\setlength{\abovecaptionskip}{0.04cm} 
		\renewcommand\arraystretch{1}
        \setlength{\tabcolsep}{6pt}
		\setlength{\aboverulesep}{2.8pt}
		\setlength{\belowrulesep}{1.4pt}
		\tiny
		\caption{Ablation study of GBA-RSU offline training schemes, with performance measured by sum rates (bps/Hz).}
		\centering
		\label{tab4}
		\resizebox{\linewidth}{!}{
			\begin{tabular}{c|c|c|c|c|c|c|c|c|c|c}
				\toprule[0.2mm]
				\multirow{2}{*}{\textbf{Schemes}}  & \multicolumn{10}{c}{\textbf{Number of vehicles}} \\[-2.6pt]  
				\cmidrule[0.1mm]{2-11}
				& 1 & 2 & 3 & 4 & 5 & 6 & 7 & 8 & 9 & 10\\ [-2.2pt]  
				\midrule[0.1mm]   
				{w random dropping} & \textbf{\underline{12.81}} & \textbf{\underline{17.79}} & \textbf{\underline{20.03}} & \textbf{\underline{23.44}} & \textbf{\underline{25.32}} & \textbf{\underline{27.57}} & \textbf{\underline{30.64}} & \textbf{\underline{32.53}} & \textbf{\underline{33.84}} & \textbf{\underline{35.92}} \\[-2.8pt]
				\midrule[0.1mm] 
				\makecell[c]{w/o random dropping} & 12.79 & 14.76 & 15.38 & 21.18 & 23.15 & 23.77 & 25.28 & 27.69 & 28.17 & 29.12 \\[-3pt]
				\midrule[0.1mm]
				\makecell[c]{WMMSE w/ channel estimation} & 13.09 & 16.92 & 20.13 & 22.78 & 24.53 & 26.75 & 28.41 & 29.93 & 31.23 & 32.56 \\[-3pt]
				\bottomrule[0.2mm]
			\end{tabular}
		}
		\vspace{-10pt}
	\end{table*}

\begin{table}[!t]
		\setlength{\abovecaptionskip}{0.1cm} 
		\renewcommand\arraystretch{0.1} 
        \setlength{\tabcolsep}{4pt}
		\centering
		\caption{Comparisons under three sensor configurations among different deployment strategies.}
		\resizebox{\linewidth}{!}{
			\label{tab5}
			\begin{tabular}{@{\hspace{1.2em}}c@{\hspace{1.2em}}|@{\hspace{1.2em}}c@{\hspace{1.2em}}}
				\toprule[0.35mm]
				{\textbf{Deployment Scheme}}  &\makecell[c]{\textbf{sum rates performance} (bps/Hz)\\(Sensor configuration: L-G~/~R-G~/~G)}  \\
				\midrule[0.15mm]
				\makecell[c]{\textbf{Pruning-based}} & \makecell[c] {\underline{\textbf{25.19}}~/~\underline{\textbf{20.39}}~/~\underline{\textbf{3.71}}} \\ 
				\midrule[0.15mm]
				\makecell[c]{Sensor-based} & \makecell[c]{19.77~/~~7.97~~/~2.57 } \\ 
				\midrule[0.15mm]	
				\makecell[c]{Model-based} 	& 16.22~/~11.33~/~1.84  \\
				\bottomrule[0.35mm]
			\end{tabular}
		}
        \vspace{2pt}
\end{table}

Then, the effectiveness of DA strategy in mitigating label and modality imbalances is analyzed. Table \ref{tab2} compares DA with benchmark distributed training strategies under three extremely imbalanced scenarios. Compared with FLASH, DA improves the feedback accuracy of GBA-Vehicle by $56.45\%$ under label imbalance only. When modality imbalnce additionally presenting, the improvement further rises to nearly $100\%$. These results indicate that DA strategy improves the coverage of the GBA-Vehicle label space and alleviates performance degradation caused by biased aggregation under label imbalance. Furthermore, to examine whether DA also promotes balanced updates across modality-specific extraction branches, we evaluate GBA-Vehicle trained with different benchmarks under varying degrees of modality missing, which reflect online heterogeneous vehicle configurations in V2X. To reduce the randomness of missing-data selection, we conduct repeated tests under both sensor heterogeneity and partial data loss. Specifically, for sensor heterogeneity, we enumerate all vehicle combinations calculated by $\sum_{q \in \mathcal{Q}} C_{10}^{\kappa^{q} \times 10}$, and for partial data loss, we conduct the same number of Monte Carlo trials. 

As shown in Fig.~\ref{fig5}, when the RGB modality completeness rate decreases to $0\%$, the effective sum rates of GBA-Vehicle trained with CL and FLASH drop by more than $18\%$ and $28\%$, respectively. In contrast, GBA-Vehicle trained with DA shows stronger resilience, with reductions of only $6\%$ and $9\%$ under the corresponding two settings. Similarly, when the LiDAR completeness ratio decreases to $0\%$, the performance of GBA-Vehicle with CL drops by nearly $74\%$, indicating that it relies heavily on LiDAR structural information and is therefore highly sensitive to LiDAR absence. By contrast, for GBA-Vehicle trained with DA, the degradation is limited to $21\%$ because DA better exploits RGB scattering distribution information. Notably, when the LiDAR completeness ratio falls below $0.7$, the achieved sum rates of GBA-Vehicle trained with DA even surpasses that achieved by GBA-Vehicle with CL. Additionally, simultaneous decreases in RGB and LiDAR completeness substantially degrade all methods. This is because relying only on GPS-based relative positioning fails to capture NLoS structures in complex urban environments, resulting in higher feedback prediction error and a lower system sum rates. Nevertheless, GBA-Vehicle trained with DA remains more robust than the other benchmarks. These results show that the proposed DA strategy effectively addresses label and modality heterogeneity in distributed multimodal sensing.

Table \ref{tab3} presents the ablation results of GBA-RSU, evaluating the effects of the non-negativity and re-normalization processes in the beam projection layer. GBA-RSU is trained on ten datasets with varying feedback-graph sizes to prevent overfitting and is evaluated in terms of sum rates with an approximately $0$ ms alignment period, so that the impact of GBA-Vehicle feedback prediction is excluded. The results indicate that omitting both processes during training violates the constraints in Eq. \eqref{eq3d} and causes model failure. While incorporating either process independently improves performance, re-normalization yields greater performance gains than non-negativity as vehicle density increases. This is because a larger number of vehicles leads to more severe inter-user interference. The re-normalization process effectively mitigates this effect through vertex pruning. Additionally, Table \ref{tab4} demonstrates that the stochastic topology perturbation strategy effectively simulates variations in the number of vehicles, thereby improving the scalability of GBA-RSU and enabling it to outperform the WMMSE w/ CE.

Table \ref{tab5} presents the performance comparison of different deployment strategies across vehicular sensor configurations during online deployment. We evaluate two common baseline strategies. The model-based strategy transmits the entire trained GBA-Vehicle to newly arriving vehicles by zero-padding the channels of missing modalities to maintain input consistency. The sensor-based strategy transmits the fusion and extraction branches associated with the available sensors with zero-padding applied to the channels of missing features However, inherent non-zero bias parameters in extraction and fusion branch transform these zero inputs into noise, degrading feedback prediction accuracy and reducing the effective sum rates. Our pruning-based strategy explicitly removes fusion network weights associated with missing modalities. This ensures fusion occurs strictly among available modalities, preventing noise introduction. The results illustrate that our strategy improves the sum rates by $78\%$ over the model-based strategy across all sensor configurations.

\vspace{-4pt}
\section{Conclusions}
This paper presented a multimodal distributed GBA scheme for scalable and low-overhead beam management in V2X systems. The architecture combined a GBA-Vehicle unit for local feedback prediction by multimodal sensing, along with a GBA-RSU unit for dynamic alignment. A dedicated DA strategy mitigated the multimodal data imbalance by addressing modality imbalance via dominant modality dropout and label imbalance via underrepresented data generation. Tailored offline training and online execution procedures coordinated these components. Extensive simulations on real-world driving datasets results showed that GBA reduces alignment overhead by $93.75\%$ versus high-resolution type II codebook feedback and achieved higher effective sum rates than WMMSE with CE. Under extreme label imbalance, the DA strategy improved feedback prediction accuracy of GBA-Vehicle by $56.45\%$ over FLASH, surging to nearly $100\%$ when modality imbalance was also introduced. These results confirmed that the proposed scheme provides a robust solution for beam management in complex vehicular environments.

\vspace{-8pt}
\bibliographystyle{IEEEtran}
\bibliography{bibtex/bib/IEEEabrv,bibtex/bib/IEEEexample}

\vspace{-8pt}
\appendix
\subsection{Proof of Proposition 1}
\label{P1}

\emph{Invariance of the constraints.}
For the power constraint Eq. \eqref{eq3d}, under the transformed allocation matrix $\Pi^\top \mathbf P \Pi$, we have $\mathrm{Tr}(\Pi^\top \mathbf P \Pi)=\mathrm{Tr}(\mathbf P)\le P_{\max}$, where the equality follows from the invariance of the trace under similarity transformations. For the beam selection constraint Eq. \eqref{eq3e}, under the transformed selection matrix $\mathbf U \Pi$, we obtain $(\mathbf U \Pi)^\top \mathbf 1_W=\Pi^\top \mathbf U^\top \mathbf 1_W=\Pi^\top \mathbf 1_K=\mathbf 1_K$, 
because multiplying the all-ones vector by a permutation matrix only reorders its entries and therefore leaves it unchanged. Thus, Eq. \eqref{eq3e} is preserved.

\emph{Invariance of the sum-rate objective.}
The objective in Eq. \eqref{eq3} is a summation over all user nodes, denoted as $\sum_{k=1}^K R_k$. The achievable rate of user $k$ is given by Eq. \eqref{eq2}. Let $\pi(\cdot)$ denote the permutation of the index set $\{1,\ldots,K\}$ induced by the permutation matrix $\Pi$. The graph after permutation is understood as a relabeled version of the original graph. Accordingly, throughout the proof, any quantity indexed by permuted node labels refers to the corresponding quantity in the permuted graph rather than the quantity at the same index in the original graph. Under the permutation $\Pi$, all user-related quantities are consistently relabeled according to the permutation. Since $\pi(\cdot)$ is a bijection on the index set, the permutation only reorders the summands in the finite sum over user nodes. Therefore, the sum rates objective is expressed by
\begin{equation}
\sum_{k=1}^K R_{\pi(k)}
= \sum_{k=1}^K R_k.
\end{equation}

\vspace{-8pt}
\subsection{Proof of Proposition 2}
\label{P2}
Since GBA-RSU comprises cascaded stages, the entire network should satisfy permutation equivariance if each constituent stage is equivariant.
 
\emph{Graph construction.}
Given the permuted inputs $\mathbf{V}\Pi$, transformed adjacency matrix is obtained as $\Pi^\top\mathbf{A}\Pi$ via Eq. \eqref{eq6}, confirming the permutation equivariance of the graph construction process.

\emph{Edge encoding and vertex encoding.}
Given the permuted inputs $\mathbf{V}\Pi$, the original edge features are expressed as:
\begin{equation}
    \mathbf{v}_{\pi(i),\pi(j)} = \mathbf{v}_{\pi(i)} \oplus \mathbf{v}_{\pi(j)}=\mathbf{v}_{i,j}, \quad (i,j) \in {\mathcal E}.
\end{equation}
Since the Edge MLP $\mathcal{G}^{\text{E}}(\cdot)$ is shared among edges, the corresponding extracted interference information of $k$-th vehicle satisfy $\mathbf f_{\pi(k),\pi(i)}^{\rm E}=\mathbf f_{k,i}^{\rm E}$. Based on the permutation equivariant of adjacency matrix, we can get the size of egde set is unchanged after transforming. Accordingly, the averagely aggregated interference information satisfies
\begin{equation}
\overline{\mathbf f}_{\pi(k)}^{\rm E}
=\frac{\sum_{({\pi(k)},i) \in \mathcal{E}_{\pi(k)}} {\mathbf f_{{\pi(k)},i}^{\rm E}}}
     {|\mathcal E_{\pi(k)}|}
=\frac{\sum_{(k,j)\in {\mathcal E}_{k}}{\mathbf f}_{k,j}^{\rm E}}{|{\mathcal E}_{k}|}
=\overline{\mathbf f}_{k}^{\rm E},
\end{equation}
Hence, the summation over neighboring nodes is unchanged except for the order of the summands. Subsequently, the following node-level update relays on shared the self-vertex MLP $\mathcal{G}^{\rm SV}(\cdot)$ and cross-vertex MLP $\mathcal{G}^{\rm CV}(\cdot)$. Consequently, the permutation equivariance of both edge and vertex encoding layers is established.

\emph{Beam projection.}
Similarly, the node-level operation at this layer would not change the equivariance, thereby indicating that the transformed masked vector satisfies $\hat{\mathbf z}_{\pi(k)} = \hat{\mathbf z}_k$. Accordingly, we can get $\sum_{k=1}^{K}\|\hat{\mathbf z}_{\pi(k)}\|_2^2 =\sum_{j=1}^{K}\|\hat{\mathbf z}_k\|_2^2$, which suppors the re-normalization operation. Currently, we can get:
\begin{equation}
    {\hat{\mathbf t}}_{\pi(k)}
    =\sqrt{\frac{P_{\text{max}}}{\sum_{k=1}^{K}\|\hat{\mathbf z}_{\pi(k)}\|_2^2}} \hat{\mathbf z}_{\pi(k)}
    = \sqrt{\frac{P_{\text{max}}}{\sum_{k=1}^{K}\|\hat{\mathbf z}_k\|_2^2}} \hat{\mathbf z}_k
    ={\hat{\mathbf t}}_{k}.
\end{equation}
Stacking these columns yields $\tilde{\hat{\mathbf T}}=\hat{\mathbf T}\Pi$, thereby verifying the permutation equivariance of GBA-RSU.

\vspace{-8pt}
\subsection{Proof of Proposition 3}
\label{P3}
We begin by introducing the forward activations and backpropagation gradients of NN. For a sample with the $m$-th label, let $\mathbf{a}_{m,l} = \Phi_l(\mathbf{x}) \in \mathbb{R}^C$ denote the input activation before the $l$-th BN block which have $C$ channels, where $\Phi_l(\cdot)$ represents the feature mapping from the input space to the activation space of that block. Let $\mathbf{g}_{m,l}$ denote the corresponding backpropagated gradient $\nabla_{\mathbf{\Theta}_l} \mathcal{L}_{\rm BCE}(\mathbf{\Theta}_l; \mathbf{a}_{m,l})$. Then, the gradient can be viewed as a function of the activation, written as $\mathbf{g}_{m,l} = \psi_{m,l}(\mathbf{a}_{m,l}) \in \mathbb{R}^C$, where $\psi_{m,l}(\cdot)$ is the activation-to-gradient mapping. We assume that $\psi_{m,l}(\cdot)$ is $C_{l,m}$-Lipschitz.

To measure the discrepancy between the expected gradients induced by synthetic and real data, we first project the vector-valued gradient mapping onto an arbitrary unit direction. Specifically, for any unit vector $\mathbf{e} \in \mathbb{R}^C$, define the scalar function $f_\mathbf{e}(\mathbf{a}_{m,l}) = \mathbf{e}^\top \psi_{m,l}(\mathbf{a}_{m,l})$. Then, for any two activation $\mathbf{a}_{m,l}, \mathbf{a}'_{m,l}$, we have:
\begin{equation}
    \begin{aligned}
|f_\mathbf{e}(\mathbf{a}_{m,l}) - f_\mathbf{e}(\mathbf{a}'_{m,l})| &= |\mathbf{e}^\top (\psi_{m,l}(\mathbf{a}_{m,l}) - \psi_{m,l}(\mathbf{a}'_{m,l}))| \\
&\le \|\mathbf{e}\|_2 \cdot \|\psi_{m,l}(\mathbf{a}_{m,l}) - \psi_{m,l}(\mathbf{a}'_{m,l})\|_2 \\
&\le C_{l,m} \|\mathbf{a}_{m,l} - \mathbf{a}'_{m,l}\|_2,
\end{aligned}
\end{equation}
where the second step follows from the Cauchy-Schwarz inequality and the last step follows from the Lipschitz assumption on $\psi_{m,l}(\cdot)$.

Now, let $\mathcal{D}^\star_{l,m}$ and $\hat{\mathcal{D}}_{l,m}$ denote the activation distributions before the $l$-th BN block induced by real data and synthetic data, respectively. The Kantorovich-Rubinstein inequality \cite{BhJL-ExpoMath-19} implies that the difference between its expectations under these two distributions is bounded by their $W_1$ distance, expressed as $\left| \mathbb{E}_{\hat{\mathbf{a}}_{m,l} \sim \hat{\mathcal{D}}_{l,m}}[f_\mathbf{e}(\hat{\mathbf{a}}_{m,l})] - \mathbb{E}_{\mathbf{a}^\star_{m,l}\sim\mathcal{D}^\star_{l,m}}[f_\mathbf{e}(\mathbf{a}^\star_{m,l})] \right| \le C_{l,m} W_1(\hat{\mathcal{D}}_{l,m}, \mathcal{D}^\star_{l,m})$. Next, defining the expected gradients induced by real and synthetic data as $\mathbf{G}^\star_{m,l}$ and $\hat{\mathbf{G}}_{m,l}$, the above inequality can be rewritten as $\left| \mathbf{e}^\top (\hat{\mathbf{G}}_{m,l} - \mathbf{G}^\star_{m,l}) \right| \le C_{l,m} W_1(\hat{\mathcal{D}}_{l,m}, \mathcal{D}^\star_{l,m})$. Since this holds for any unit vector, taking the supremum over all unit vectors and using the dual representation of the Euclidean norm yields
\begin{equation}
    \begin{aligned}
\|\hat{\mathbf{G}}_{m,l} - \mathbf{G}^\star_{m,l}\|_2 
&= \sup_{\|\mathbf{e}\|_2=1} |\mathbf{e}^\top \hat{\mathbf{G}}_{m,l} - \mathbf{G}^\star_{m,l}| \\
&\le C_{l,m} W_1(\hat{\mathcal{D}}_{l,m}, \mathcal{D}^\star_{l,m}) \\
&\le C_{l,m} W_2(\hat{\mathcal{D}}_{l,m}, \mathcal{D}^\star_{l,m}),
\end{aligned}
\end{equation}
where the last step follows from \cite{BhJL-ExpoMath-19}. Under the diagonal Gaussian assumption, the activation distributions are expressed as $\mathcal{D}^\star_{l,m} = \mathcal{N}(\boldsymbol{\mu}_l, \Sigma_l)$ and $\hat{\mathcal{D}}_{l,m} = \mathcal{N}(\hat{\boldsymbol{\mu}}_l, \hat{\Sigma}_l)$. The corresponding covariance matrices are defined as $\Sigma_l = \text{diag}(\boldsymbol{\sigma}_l^2)$ and $\hat{\Sigma}_l = \text{diag}(\hat{\boldsymbol{\sigma}}_l^2)$. Accordingly, the matrix square roots simplify to $\Sigma_l^{1/2} = \text{diag}(\boldsymbol{\sigma}_l)$ and $(\Sigma_l^{1/2} \hat{\Sigma}_l \Sigma_l^{1/2})^{1/2} = \text{diag}(\boldsymbol{\sigma}_l \odot \hat{\boldsymbol{\sigma}}_l)$. Applying the Bures-Wasserstein \cite{PaZe-ARSA-19} to these diagonal distributions yields the explicit 2-Wasserstein distance:
\begin{equation}
\begin{aligned}
W_2^2(\hat{\mathcal{D}}_{l,m}, \mathcal{D}^\star_{l,m}) = \|\boldsymbol{\mu}_l - \hat{\boldsymbol{\mu}}_l\|_2^2 + \|\boldsymbol{\sigma}_l - \hat{\boldsymbol{\sigma}}_l\|_2^2.
\end{aligned}
\end{equation}
Substituting this expression into the previous gradient error bound yields Eq. \eqref{eq47}.

%








\end{document}